\documentclass[journal]{IEEEtran}
\usepackage{amsmath,amsfonts}
\usepackage{algorithmic}
\usepackage{algorithm}
\usepackage{array}
\usepackage[caption=false,font=normalsize,labelfont=sf,textfont=sf]{subfig}
\usepackage{textcomp}
\usepackage{stfloats}
\usepackage{url}
\usepackage{verbatim}
\usepackage{graphicx}
\usepackage{cite}
\newfloat{algorithm}{tbp}{loa}
\providecommand{\algorithmname}{Algorithm}
\floatname{algorithm}{\protect\algorithmname}

\newcommand{\beq}{\begin{equation}}
\newcommand{\eeq}{\end{equation}}
\def\bv{\mbox{\boldmath $v$}}

\def\bu{\mbox{\boldmath $u$}}

\def\bv{\mbox{\boldmath $v$}}

\def\bx{\mbox{\boldmath $x$}}

\def\bs{\mbox{\boldmath $s$}}

\def\b1{\mbox{\boldmath $1$}}
\def\b0{\mbox{\boldmath $0$}}

\def\mA{\mbox{$\mathbf{A}$}}
\def\mD{\mbox{$\mathbf{D}$}}

\def\mX{\mbox{$\mathbf{X}$}}
\def\mD{\mbox{$\mathbf{D}$}}

\def\mI{\mbox{$\mathbf{I}$}}

\def\mL{\mbox{$\mathbf{L}$}}

\def\mP{\mbox{$\mathbf{P}$}}

\def\mS{\mbox{$\mathbf{S}$}}

\def\mU{\mbox{$\mathbf{U}$}}

\newcommand{\ds}{\displaystyle}

\floatstyle{ruled}
\newfloat{algorithm}{tbp}{loa}
\providecommand{\algorithmname}{Algorithm}
\floatname{algorithm}{\protect\algorithmname}
\newenvironment{proof}[1][Proof]{\noindent \textbf{#1.} }{\qedsymbol}
\newcommand{\qedsymbol}{\hspace{\fill}\rule{1.5ex}{1.5ex}}

\newtheorem{proposition}{Proposition}

\begin{document}
\title{Learning  Laplacian Forms for Graph Signal Processing via the Deformed  Laplacian}
%
\author{Stefania Sardellitti, \IEEEmembership{Senior Member, IEEE}
\thanks{S. Sardellitti is with the Dept. of Engineering and Sciences, Universitas Mercatorum, Rome, Italy (e-mail: stefania.sardellitti@unimercatorum.it).}
\thanks{This work was supported  by the  FIN-RIC Project 
TSP-ARK, financed by Universitas Mercatorum under grant n. 20-FIN/RIC.}
}
\maketitle
%
%
%

%
\maketitle
\begin{abstract}  
Learning the  graph Laplacian from observed data is one of the most investigated and fundamental tasks in Graph Signal Processing (GSP). Different variants of the Laplacian, such as the combinatorial, signless or signed Laplacians have been considered depending on  the type of features to be extracted from the data. The main contribution of this paper is the introduction of a parametric Laplacian, called the deformed Laplacian, defined as a quadratic matrix polynomial that provides a parametric dictionary for  graph signal processing. 
The deformed Laplacian can be interpreted as the generator of a parametric linear reaction-diffusion dynamics on graphs, capturing the interplay between diffusive coupling and  nodal reaction effects.  It is a parametric polynomial matrix that enables the design of novel topological operators  tailored to both the underlying graph structure and the observed signals. Interestingly, we show that several Laplacian variants proposed in the literature arise as special cases of the deformed Laplacian. We then develop a method to jointly learn the deformed Laplacian and  the graph signals from data, showing how its use improves signal representation across a broad class of graphs  compared to standard Laplacian forms. Through extensive numerical experiments on both synthetic  and real-world datasets, including financial and communication networks, we  assess the benefits of the proposed method in terms of graph signal reconstruction error and sparsity of the representation.

\end{abstract}
\begin{IEEEkeywords}
Graph signal processing, deformed Laplacian, graph learning, Laplacian forms.
\end{IEEEkeywords}

\section{Introduction}
\label{sec:intro}

A wide range of data science problems can be naturally modeled using graphs,  including social, financial, communication and biological networks \cite{newman2018networks}, \cite{posfai2016network}.   In such representations, entities of interest are associated with nodes, while edges encode pairwise relationships among them, enabling the modeling of complex interactions in structured datasets. 
Recently, Graph Signal Processing (GSP) has emerged as a powerful framework for the analysis and processing of signals defined over the nodes of irregular topological domains such as graphs \cite{Shuman2013},\cite{Moura2014},\cite{ortega2018graph}. A cornerstone  of graph signal representation and processing is the Laplacian matrix, an  algebraic  operator that encodes the topology of the graph and enables the spectral representation and processing of signals defined on its nodes. Depending on the application and the structural properties of the graph, different Laplacian forms have been proposed in the literature, such as the combinatorial Laplacian \cite{chung1997spectral}, the signless  Laplacian \cite{cvetkovic2007}, and the signed Laplacian \cite{kunegis2010spectral}. Each of these operators is particularly suited to capture specific features of the graph. For instance, the combinatorial Laplacian is well suited for modeling smooth signals over clustered graphs, the signless Laplacian is particularly effective for bipartite structures, and the signed Laplacian naturally accounts for graphs with positive (cooperative) and negative (antagonistic) interactions. Interestingly, the  smoothness or total variation of signals observed over graphs, algebraically represented by these operators, can be characterized through their corresponding quadratic forms \cite{chung1997spectral}, \cite{kunegis2010spectral}, \cite{Matz_mag_2020}. 
However,  the eigenvectors of these Laplacians exhibit different structural  patterns. 
 Consequently, their spectral properties  lead to different signal representations, enabling graph signal processing frameworks to capture a variety of structural patterns \cite{Matz_mag_2020}, \cite{gallier2016spectral}, \cite{kunegis2010spectral}.  As a result, the choice of Laplacian plays a crucial role in determining the features that can be extracted from the data. However, most existing graph learning methods \cite{Dong16,segarra2017network,sardellitti2019graph,egilmez2017graph,de2022learning} 
 rely on prior knowledge of the appropriate Laplacian form to be used for data processing.

Recently, in \cite{zhao2025fractional} a fractional graph spectral filtering method based on parametrized matrices was proposed. In \cite{Averty}, the authors proposed families of graph representation matrices combining degree and adjacency matrices with tuned weights. In \cite{wang2020bounds} a family of graph representations called $\alpha$-Laplacian was also introduced.

Our goal in this paper is to propose a novel  parametric Laplacian operator for graph signal processing, called the  \textit{Deformed Graph Laplacian (DGL)}, tailored to both the graph structure and the signals observed on the graph. 
The deformed Laplacian was proposed in \cite{morbidi2013deformed} for consensus protocol in multi-agent systems and employed  for semisupervised learning in label prediction tasks \cite{gong2015deformed}. In \cite{saade2014spectral} it is applied to spectral clustering, showing its relation with the non-backtracking matrix and  the phase transition in an Ising model. 
The DGL  is a second-degree matrix polynomial   $\mathbf{L}_{\text{DF}}(r)$ in a real (or complex) variable $r$ 
\cite{grindrod2018deformed}, \cite{tisseur2001quadratic}. \\ In this paper we provide a novel interpretation of the deformed Laplacian  as the generator of a parametric
linear reaction-diffusion dynamics on graphs, encoding the
interplay between diffusive coupling and nodal reaction effects \cite{evans2022partial,van2023graph}.  The parameter 
$r$ governs both the intensity of diffusive coupling and the characteristics of the local dynamics. This perspective shows that the deformed Laplacian extends the combinatorial graph Laplacian by introducing a degree-dependent reaction term, thus generalizing purely diffusive dynamics to a richer class of processes. 

Furthermore, we show how the deformed Laplacian reduces to the standard Laplacian matrices, i.e. the combinatorial, the signless and the signed Laplacians, for selected values of $r$. Specifically, these standard  Laplacians are obtained for values of the parameter $r$ solving the polynomial eigenvalue problem, with associated eigenvectors in the kernel of  $\mathbf{L}_{\text{DF}}(r)$. \\
We propose a  framework for graph signal processing that addresses a key limitation of existing graph learning approaches in GSP, namely the need for prior knowledge of the underlying Laplacian form. 
Using the eigenvectors of the deformed Laplacian as parametric dictionaries for  spectral representation of graph signals, we extend GSP tools  by using this novel algebraic operator.
Then, we develop a framework that, given the graph adjacency, 
 learns from data the optimal graph model, i.e., the most appropriate Laplacian form, that  best   captures the data features.
The main contributions of this paper can be summarized as follows:
\begin{itemize}
\item[1)] The parametric deformed Laplacian is introduced as an algebraic operator that provides a unified framework encompassing different Laplacian forms through a single parameter. The operator is defined as a second-order matrix polynomial, whose spectral properties reveal that, for specific parameter values, its eigenvectors coincide with those associated with standard Laplacian operators, including the combinatorial, signless, and signed Laplacians;
\item[2)] A novel interpretation of the deformed Laplacian is provided, viewing it as a graph operator associated with a parametric reaction–diffusion dynamics. In this perspective, the operator captures the interplay between diffusive interactions across edges and local node-level reactions, extending purely diffusive dynamics to a broader class of processes;
\item[3)] A method is proposed to jointly learn the optimal Laplacian form from data and  a sparse graph signal representation, given the graph adjacency matrix. The approach overcomes a key limitation of existing graph signal processing methods, namely the requirement of prior knowledge of the Laplacian form used for spectral representation. Furthermore, it is particularly effective for graphs with heterogeneous structures, such as those combining clustered and bipartite patterns;
    \item[4)] Extensive numerical experiments demonstrate that learning the graph operator form using the deformed Laplacian improves both the accuracy of signal representation and its sparsity, particularly in scenarios where the graph topology is not well structured and exhibits a combination of clustered and bipartite patterns. The proposed approach is validated on both synthetic and real-world datasets, including financial stock networks \cite{de2022learning} and communication data from the Copenhagen Networks Study \cite{copenhagen}.
\end{itemize}
This paper is organized as follows. Section II provides a brief overview of matrix polynomials and the polynomial eigenvalue problem. In Section III we introduce the deformed Laplacian as a parametric quadratic matrix polynomial and provide a novel interpretation of the deformed Laplacian as a reaction-diffusion operator on graphs.  Section IV shows  how the combinatorial, signless, and signed Laplacians can be recovered as specific instances of the deformed Laplacian, while  Section V analyses the spectral properties of this operator.
Section VI introduces a spectral signal representation based on the eigenvectors of the deformed Laplacian. In Section VII, a learning strategy for jointly estimating the Laplacian form and a sparse signal representation is presented, together with extensive numerical results on both synthetic and real-world datasets. Finally, in Section VIII we draw some  conclusions.

\section{Matrix Polynomials: A brief overview} 
In this section, we introduce general notions of matrix polynomials which play a key role in the introduction of the Deformed Graph Laplacian (DGL). 
The theory of matrix polynomials has applications in many areas as, for instance, differential equations, systems theory, mechanics and vibrations, numerical analysis \cite{gohberg2009matrix},\cite{tisseur2001quadratic}.
A matrix polynomial is a polynomial in a complex  or real variable with matrix coefficients. 

Given a field \(\mathbb{C}\) (or \(\mathbb{R}\)), we denote by $\mathbb{C}(\lambda)$ the set of univariate polynomials in $\lambda \in \mathbb{C}$ with coefficients in $\mathbb{C}$. 
The set of square matrices of size $n$ with entries in $\mathbb{C}[\lambda]$ is denoted by $\mathbb{C}[\lambda]^{n \times n}$. For $j = 0, 1, \ldots, k$, let $\mA_j \in \mathbb{C}^{n \times n}$ be square matrices of the same size, with $\mA_k \neq 0$.
A square matrix polynomial of order $n$ and degree $k$ is the matrix-valued function \cite{gohberg2009matrix} \beq  \mP(\lambda) = \sum_{j=0}^{k} \mA_j \lambda^j \in \mathbb{C}[\lambda]^{n \times n}.\eeq
The spectrum of $\mP(\lambda)$, i.e. the set of eigenvalues of $\mP(\lambda)$, denoted by $\boldsymbol{\Sigma}(\mP)$, is defined as  \cite{gohberg2009matrix}, \cite{morbidi2012properties}
\beq \boldsymbol{\Sigma}(\mP)=\{ \lambda \in \mathbb{C} \, : \det(\mP(\lambda))=0\} \label{eq:Sigma_P}.\eeq
If the degree of  $\text{det}(\mathbf{P}(\lambda))$  is less than $kn$, $\infty$
 is said to be an eigenvalue of $\mathbf{P}(\lambda)$. 
The algebraic multiplicity of an eigenvalue $\lambda_0$ is the order of the corresponding
zero in $\text{det}(\mathbf{P}(\lambda))$ . The geometric multiplicity of $\lambda_0$ is the dimension of $\text{ker}(\mathbf{P}(\lambda))$.
Let us now introduce some properties of matrix polynomials \cite{tisseur2001quadratic}, \cite{morbidi2013deformed}.

\textit{Definition 1:}
\textit{Given the matrix polynomial $\mP(\lambda)$ of degree $k$, it  is called non regular when  $\det(\mP(\lambda)) \equiv 0$ for all values of $\lambda$ and  regular otherwise.
If  $\mP(\lambda)$ is regular, then it holds: \begin{itemize} \item[i)] if the degree $q$ of $\det(\mP(\lambda))$ is   $q=k n$, then  $\mP(\lambda)$ has $k n$ finite eigenvalues (counted with algebraic multiplicity); \item[ii)]  if the degree of $\det(\mP(\lambda))$ is $q<kn$, then $\mP(\lambda)$ has $q$ finite  eigenvalues and $kn-q$  infinite eigenvalues\footnote{The infinite eigenvalues of the matrix polynomial correspond to the zeros of the reversal polynomial  $\lambda^k \mP(\lambda^{-1})$.}.
\end{itemize}}
Let us now introduce the non-linear eigenvalue problem \cite{gohberg2009matrix}, \cite{grindrod2018deformed},\cite{bueno2011recovery}.\\ 
\textit{Definition 2:}
\textit{For regular matrix polynomials of degree $k$, the Polynomial Eigenvalue Problem
(PEP) consists of finding scalars $\lambda_0 \in \mathbb{C}$ and nonzero vectors $\bu,\bv \in \mathbb{C}^n$ satisfying
\beq \label{eq:def_eig} \mP(\lambda_0)\bu = \mathbf{0} \;\; \text{and} \;\; \bv^H\mP(\lambda_0)=\mathbf{0}.
\eeq 
The values $\lambda_0$ are known as the eigenvalues of $\mP(\lambda)$
and the associated nonzero vectors $\bu$ and $\bv$
are known as right and left eigenvectors of
$\mP(\lambda)$, respectively.}

If a square matrix polynomial $\mP(\lambda)$ is such that $\mP(\lambda)=\mP(\lambda)^H$
for all $\lambda  \in \mathbb{R}$, then
$\mP(\lambda)$ is called Hermitian \cite{gohberg2009matrix}, \cite{mehrmann2016sign}. The spectral theory of Hermitian matrix polynomials is significantly more intricate than the classical spectral theory of Hermitian matrices. For instance, in contrast to the matrix case, the eigenvalues of Hermitian matrix polynomials are not necessarily real \cite{tisseur2001quadratic}.
It is important to note that the PEP problem of the degree $1$ matrix polynomial $\mP(\lambda)=-\lambda \mA_1 + \mA_0$ is the well-known Generalized Eigenvalue Problem (GEP), expressed as 
\beq \label{eq:GEP}
\begin{array}{lll}
&(-\lambda_i \mA_1 + \mA_0) \bu_i=\mathbf{0} \medskip\\
&\bv_i^H(-\lambda_i \mA_1 + \mA_0) =\mathbf{0}.
\end{array}
\eeq
 Furthermore, assuming $\mA_1=\mI$, (\ref{eq:GEP}) reduces to the Standard Eigenvalue Problem (SEP) for the matrix $\mA_0$, namely 
 \beq \label{eq:SEP}
\begin{array}{lll}
&\mA_0 \bu_i=  \lambda_i  \bu_i \medskip\\
&\bv_i^H \mA_0 =\lambda_i \bv_i^H.
\end{array}
\eeq 

Therefore, PEPs represent a broad and important class of nonlinear eigenvalue problems. While less studied than standard (SEP) and generalized (GEP) eigenvalue problems, they encompass both as special cases and thus deserve interesting properties to be investigated \cite{gohberg2009matrix}.

\section{The Deformed  Laplacian}

Graph-based representations are powerful tools  for analyzing data, encoding pairwise relationships between the observed signals through the graph topology. 
 A graph $\mathcal{G}=(\mathcal{V},\mathcal{E})$ is composed by a set $\mathcal{V}$ of $N$ vertices  and a set $\mathcal{E}$ of edges  between nodes.
The  connectivity among the nodes  of a graph can be  captured by the adjacency matrix $\mA \in \mathbb{R}^{N\times N}$ with entries  $a_{ij} \neq 0$ if there is an edge between nodes $i$ and $j$, and $a_{ij}=0$ otherwise.
In GSP, a  graph signal $\bx \in \mathbb{R}^{N}$  on the nodes of a graph $\mathcal{G}$ is defined as a real-mapping on the set of vertices 
$\mathcal{V}$, i.e. $\bx : \mathcal{V}\rightarrow \mathbb{R}^{N}$.

We focus on undirected graphs, i.e. graphs with symmetric adjacency matrices and with no self-loops, i.e., $a_{ii}=0$ $\forall i.$ However,
 we do not limit our formulation to binary adjacency  matrices, so that we assume, unless differently stated, that the edge weights $a_{ij}$ can be real numbers, not necessarily non-negative. More specifically, a positive edge weight $a_{ij}>0$ indicates that the signals at nodes $i$ and $j$ tend to be similar (or cooperative), whereas a  negative coefficient   $a_{ij}<0$ reflects dissimilarity (or antagonism).
 To encompass both the cases of non-negative and signed coefficients, we define the adjacency matrix $\bar{\mA}$ as
 \beq \label{eq:bar_A}
\bar{\mA}=\left\{ \begin{array}{lll}
\mA_{\text{S}} & \text{if} \; a_{ij} \in \mathbb{R}, \; \;\; a_{ii}=0, \forall i,j\\
\mA & \text{if} \; a_{ij} \in \mathbb{R}^{+}, \; a_{ii}=0, \forall i,j
\end{array}
\right. .
 \eeq
 Similarly, we define  the diagonal degree matrix 
  \beq \label{eq:bar_D}
\bar{\mD}=\left\{ \begin{array}{lll}
\mD_{\text{S}} & \text{if} \; a_{ij} \in \mathbb{R}, \;  \; \; d_{ii}=\sum_{j=1}^{N}|a_{ij}| \\
\mD & \text{if} \; a_{ij} \in \mathbb{R}^{+}, \;   d_{ii}=\sum_{j=1}^{N}a_{ij}\  \end{array}
\right. .
\eeq

 We  associate with any  graph a special matrix polynomial $\mL_{\text{DF}}(r)$, named \textit{Deformed Graph Laplacian} (DGL). It  is a quadratic 
matrix polynomial \cite{tisseur2001quadratic}
\cite{grindrod2018deformed} in the variable $r \in \mathbb{C}$ \cite{gong2015deformed}, \cite{morbidi2013deformed}, \cite{morbidi2012properties}.
The DGL has been studied in \cite{morbidi2013deformed} for generalizing consensus protocol in multi-agent systems and robotics. More formally, we defined the deformed Laplacian as follows \cite{morbidi2013deformed}.\\
 \textbf{Definition 3}
\textit{Let $\mathcal{G}$ a graph with adjacency matrix $\bar{\mA}$ and degree matrix $\bar{\mD}$. For any $r \in \mathbb{C}$, the associated Deformed Graph Laplacian is the following quadratic hermitian
matrix polynomial:
\beq \label{eq:deformed_Lapl}
\mL_{\text{DF}}(r)=(\bar{\mD}-\mI) r^2 -\bar{\mA} r +\mI. 
\eeq}

It can be easily seen that, for $r \in \mathbb{R}$, $\mL_{\text{DF}}(r)$ is a symmetric matrix parameterized  with respect to $r$, in general not positive definite. Furthermore, $\mL_{\text{DF}}(r)$ is a comonic polynomial matrix, i.e. it holds $\mL_{\text{DF}}(0)=\mathbf{I}$ \cite{gohberg2009matrix}.

Considering $r \in \mathbb{R}$, the deformed Laplacian, $\mL_{\text{DF}}(r)$ is, for each $r$, a symmetric real matrix with SEP eigen-decomposition 
\beq \label{eq:eig_dec_Ldf}
\mL_{\text{DF}}(r)=\mU(r) \boldsymbol{\Lambda}(r) \mU(r)^T,
\eeq
where $\boldsymbol{\Lambda}(r)$ is  the eigenvalue diagonal matrix and  $\mU(r)$ is the $N\times N$ matrix with columns the associated eigenvectors. Note that the eigenvalues of the SEP associated with the DGL are real-valued functions of 
$r$, although generally nonlinear. In the next section, we provide a novel interpretation of the DGL as a reaction–diffusion operator on graphs.

\subsection{The Deformed Laplacian as a  Reaction-Diffusion Operator}

Diffusion processes on graphs provide a natural dynamical interpretation of the combinatorial Laplacian \cite{evans2022partial,van2023graph}. In this section, we introduce a more general class of models, namely reaction–diffusion equations \cite{evans2022partial,murray2003mathematical}, which are parabolic partial differential equations (PDEs) describing the evolution of a function under the combined effects of diffusion and local reactions, and play a central role in pattern formation theory \cite{turing1990chemical}.
Then, we provide a novel interpretation of 
the deformed Laplacian $\mL_{\mathrm{DF}}(r)$ as the generator of a parameter-dependent linear reaction-diffusion dynamics on a graph.

Let us consider  a scalar function $\phi(\bx,t)$ depending on  the spatial coordinates $\bx=(x_1,x_2) \in \mathbb{R}^2$ and time $t$.
The continuous-time diffusion-reaction equation can be written as \cite{evans2022partial}, \cite{murray2003mathematical}: 
\beq \label{eq:cont_diff_reac}
\frac{\partial \phi(\bx,t)}{\partial t}=\underbrace{\eta \nabla^2 \phi(\bx,t)}_{\text{diffusion}} + \underbrace{h(\bx,t)}_{\text{reaction}} 
\eeq
where $\eta>0$ is  the diffusion coefficient and $h(\bx,t)$  accounts for  local reaction terms.
Note that in the absence of the reaction term, equation (\ref{eq:cont_diff_reac}) reduces to the classical diffusion equation.
By spatial discretization of the continuous equation in (\ref{eq:cont_diff_reac}) \cite{nakao2010turing}, we  obtain the following graph-based dynamics
\beq \label{eq:diff_reac}
\frac{d \boldsymbol{\phi}(t)}{d t}=-\eta \mL\boldsymbol{\phi}(t)+ \boldsymbol{h}(t)
\eeq
where $\mL=\mD-\mA$ is the combinatorial Laplacian and the vectors $\boldsymbol{\phi}(t)$ and $\boldsymbol{h}(t)$ collect the values $\phi_i(t)$ and $h_i(t)$ at each node $i$, respectively.
Note that the first term represents the discrete counterpart of diffusion, arising from linear interactions along the edges and enforcing a local balance at each node, whereas the second term accounts for local reaction dynamics.\\
Inspired by  (\ref{eq:diff_reac}), we propose a novel interpretation of diffusion-reaction dynamics on graph leading to a parametric Laplacian, namely the deformed Laplacian, which includes local reaction effects. In this framework, the evolution at each node is governed by the interplay between diffusive coupling and node-dependent production or dissipation terms.
Specifically, we assume in (\ref{eq:diff_reac})
that the local reaction is modeled as 
\beq
h_i(t)=\phi_i(t)[(1-d_{ii})r^2+d_{ii} r-1]
\eeq
that is proportional to  $\phi_i(t)$
through the second-order polynomial $p(r)=(1-d_{ii})r^2+d_{ii} r-1$.
Under this assumption, and setting $\eta=1$, (\ref{eq:diff_reac}) can be written for each node $i=1,\ldots,N$ as 
\beq
\frac{d {\phi}_i(t)}{d t}=- r \sum_{j=1}^{N} a_{i,j}(\phi_i(t)-\phi_j(t))+\phi_i(t)[(1-d_{ii})r^2+d_{ii} r-1] \; 
\eeq
where the first term corresponds to a diffusive coupling across adjacent nodes,  while the second term represents a local reaction  explicitly depending on the node degree.
Collecting all node dynamics in vector form, we obtain
\beq \label{eq:diff_reacdef}
\frac{d \boldsymbol{\phi}(t)}{d t}=-\mL_{\text{DF}}(r)\boldsymbol{\phi}(t)
\eeq
which shows that the dynamics are governed by the deformed Laplacian
 \beq \mL_{\mathrm{DF}}(r) = (\mD - \mI)r^2 - \mA r + \mI. \eeq In \cite{morbidi2012properties}, the deformed consensus protocol is analyzed by investigating the convergence properties of the dynamical system in \eqref{eq:diff_reacdef}, without providing an interpretation of the associated operator as a reaction-diffusion operator.

In contrast, our interpretation shows that 
$\mL_{\mathrm{DF}}(r)$ can be viewed as a parameter-dependent reaction-diffusion operator on the graph, where the parameter 
$r$ modulates both the strength of the diffusive coupling and the nature of the local dynamics.
In this sense, 
$\mL_{\mathrm{DF}}(r)$ generalizes the standard graph Laplacian by incorporating a degree-dependent reaction term, thereby extending purely diffusive dynamics to a broader class of processes. In particular, the parameter 
$r$ enables a continuous transition between regimes dominated by diffusion and regimes characterized by local amplification or dissipation.

\section{The Deformed Graph Laplacian: A Unified Laplacian form}

A key property of the deformed Laplacian is that it is a parametric matrix which, for specific values of $r$, reduces to other well-known Laplacian forms  commonly used in GSP. In fact, there are several variants of the graph Laplacian, such as the combinatorial \cite{chung1997spectral}, the signless \cite{cvetkovic2007} and the signed Laplacians \cite{kunegis2010spectral}. Each of these forms is able to reveal different properties of the graph by capturing different features from data. In the following, we will show how these matrices can be derived as special cases of the deformed Laplacian $\mL_{\text{DF}}(r)$.

\subsection{The combinatorial graph Laplacian
}
Let us consider an undirected graph whose incidence matrix $\mA$ has non-negative entries $a_{ij}\geq 0$.
The combinatorial Laplacian matrix \cite{MERRIS94},\cite{chung1997spectral}, $\mL=\mD-\mA$,  can be derived from (\ref{eq:deformed_Lapl}) by setting $r=1$. Thus, we get 
\beq \mL_{\text{DF}}(1)\equiv \mL=\mD-\mA. \eeq
which is a symmetric, semidefinite positive matrix \cite{mohar1991laplacian}.  For connected graphs the first eigenvalue $\lambda_1(\mL)$ is equal to zero with associated eigenvector the constant vector $\mathbf{1}$. 
The algebraic connectivity of $\lambda_1(\mL)$ is equal to the number of connected component of $\mathcal{G}$.
 The eigenvector associated with the zero eigenvalue is a constant vector, and it holds
$
\mL_{\text{DF}}(1) \mathbf{1}=\mathbf{0}.
$
Then, according to  (\ref{eq:def_eig}),  $r=1$ is an eigenvalue of the polynomial matrix $\mL_{\text{DF}}(r)$ (since $\text{det}(\mL_{\text{DF}}(1))=0$) with associated constant eigenvector.
In a graph with clusters the 
 eigenvectors associated with the smallest eigenvalues of $\mL$ are smooth within clusters.
 The quadratic form built on the combinatorial Laplacian provides a measure of the quadratic total variation of a graph signal $\bx \in \mathbb{R}^N$ and is given by 
\beq \label{eq:quad_form_L}
\text{QTV}(\bx)=\bx^T \mL \bx=\ds \sum_{i,j=1}^{N} a_{ij}(x_i-x_j)^2.
\eeq
Then, the eigenvectors of $\mL$ are the orthonormal bases minimizing the quadratic total  variation  with orthonormal constraints and  they are useful bases to
identify  clusters.

To illustrate such behavior, we consider in Fig. \ref{fig:fig_kar} the network of Zachary’s Karate Club, which encodes social interactions between members of a university karate club \cite{zachary1977information},\cite{nr}.  The graph is composed of $N=34$ nodes representing individuals in the club, while links represent social interactions. This network is known to split into two main communities as a result of an internal conflict. In Fig. \ref{fig:fig_kar} we plot  on the nodes the values of the  Fiedler eigenvector, i.e., the second eigenvector, of the graph Laplacian $\mL$.
 It can be noticed that the eigenvector assumes similar values on the nodes within three clusters, suggesting the presence of subgroups within the two principal communities.
\begin{figure}[t] 
\centering
	\includegraphics[width=8.0cm,height=5.0cm]{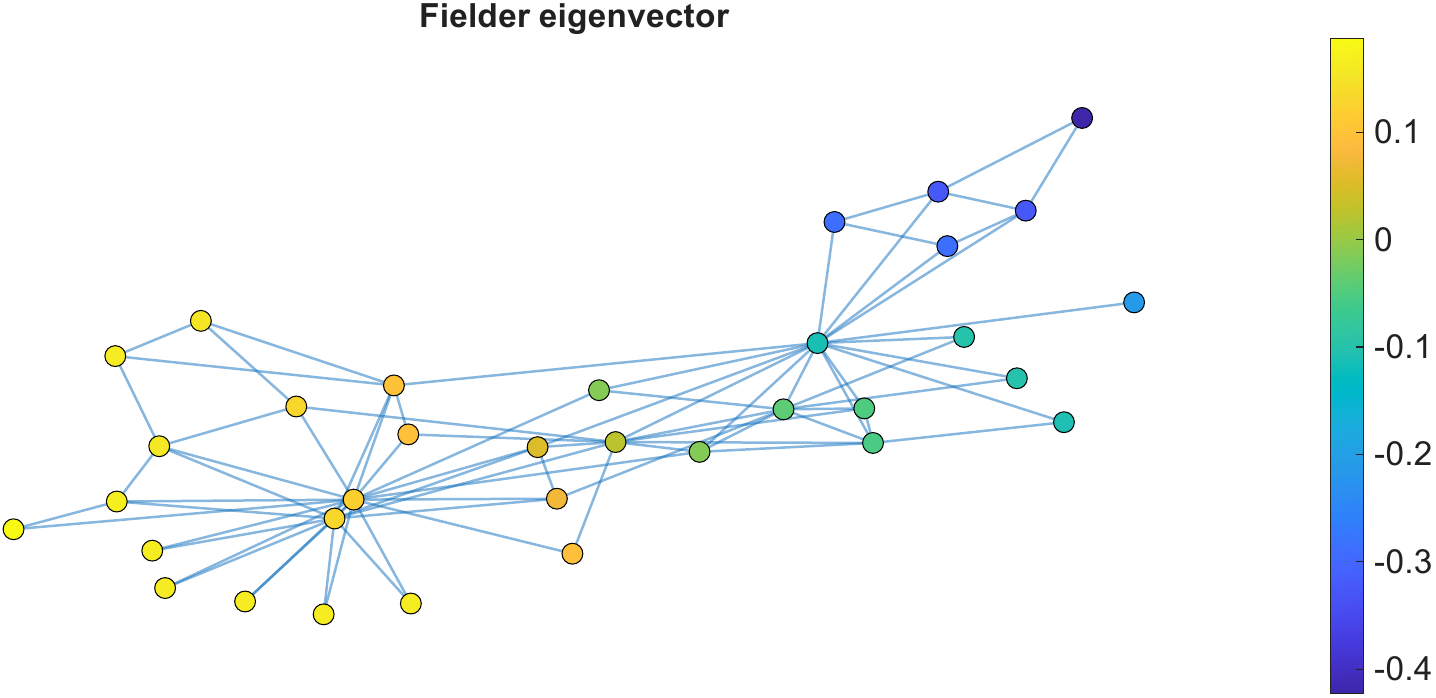}
	
    \caption{Second eigenvector  of  $\mL$ on the Zachary's karate club network.}\label{fig:fig_kar}
\end{figure}

\subsection{The signless graph Laplacian} 
A further  relevant, albeit less known, operator for graph representation is the \textit{signless} Laplacian associated with a graph $\mathcal{G}$.
Assuming non-negative adjacency matrices, the \textit{signless} Laplacian of a graph $\mathcal{G}$ is defined as  \cite{cvetkovic2007} \beq \mL_{\text{SL}}=\mD+\mA. \eeq Setting $r=-1$ in (\ref{eq:deformed_Lapl}) the deformed Laplacian reduces to the signless Laplacian, i.e.,  
\beq 
\mL_{\text{SL}} \equiv \mL_{\text{DF}}(-1)=\mD+\mA.
\eeq
The signless Laplacian is  more suitable   for identifying the bipartiteness  of  graphs. A graph $\mathcal{G}$ is bipartite if the set of its vertices $\mathcal{V}$ can be divided into two disjoint sets $\mathcal{V}_1,\mathcal{V}_2$ such that each edge in $\mathcal{E}$ connects a node in $\mathcal{V}_1$ to one in $\mathcal{V}_2$. The matrix $\mL_{\text{SL}}$ is symmetric, semidefinite positive, and it is positive definite if and only if the underlying graph is not bipartite \cite{brouwer2011spectra}. 
Then, the following properties hold for the signless Laplacian \cite{cvetkovic2007}[Prop. 2.1 items i) and ii)].
\begin{proposition} 
\textit{For any graph, the following statements hold:
\begin{itemize}
\item[i)] The least eigenvalue of the signless Laplacian of a connected graph is equal to $0$ 
 if and only if the graph is bipartite. In this case $0$ is a simple eigenvalue;
 \item[ii)]  The multiplicity of the eigenvalue $0$ of the signless Laplacian is equal
to the number of bipartite components;
\item[iii)] For bipartite graphs,  it holds
$\text{det}(\mL_{\text{DF}}(-1))=0$, 
then $r=-1$ is an eigenvalue of the matrix polynomial $\mL_{\text{DF}}(r)$ with  multiplicity equal to the number of bipartite components in the graph.  
\end{itemize}}
 \end{proposition}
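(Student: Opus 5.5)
The plan is to derive all three items from the quadratic form of the signless Laplacian and the block structure of connected components. We assume non-negative weights, so $a_{ij}\ge 0$.

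\emph{Quadratic form.} For any $\bx\in\mathbb{R}^N$ we have
\beq
\bx^T\mL_{\text{SL}}\bx=\sum_{i<j} a_{ij}(x_i+x_j)^2 ,
\eeq
which follows by expanding $\bx^T\mD\bx+\bx^T\mA\bx$ and grouping terms by edge. This gives positive semidefiniteness, so the least eigenvalue is nonnegative. It also shows that $\mL_{\text{SL}}\bx=\mathbf{0}$ if and only if $x_i=-x_j$ for every edge with $a_{ij}>0$.

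\emph{Item i) and ii).}
\begin{itemize}
\item For a connected graph, fix a nonzero kernel vector $\bx$. Following any path from a vertex, the value $|x_i|$ is constant and the sign alternates at each step. Since the graph is connected, $x_i\neq 0$ everywhere.
\item Setting $\mathcal{V}_1=\{i:x_i>0\}$ and $\mathcal{V}_2=\{i:x_i<0\}$ gives a bipartition, because every edge joins opposite signs.
\item Conversely, if the graph is bipartite with parts $\mathcal{V}_1,\mathcal{V}_2$, the vector $\bs$ with $s_i=1$ on $\mathcal{V}_1$ and $s_i=-1$ on $\mathcal{V}_2$ lies in the kernel.
\item The propagation argument also shows that any kernel vector is determined by its value at a single vertex. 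Hence the kernel is spanned by $\bs$, and $0$ is a simple eigenvalue.
\item For item ii), order the vertices by connected component so that $\mL_{\text{SL}}$ becomes block diagonal. The kernel is the direct sum of the blocks' kernels. By item i), each bipartite component contributes dimension one and each non-bipartite component contributes zero. An isolated vertex has a $1\times 1$ zero block and is counted as a trivially bipartite component.
\end{itemize}

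\emph{Item iii).} Since $\mL_{\text{DF}}(-1)=\mD+\mA=\mL_{\text{SL}}$, a bipartite graph gives a nonzero $\bs\in\ker\mL_{\text{DF}}(-1)$. Therefore $\det\mL_{\text{DF}}(-1)=0$, and by \eqref{eq:Sigma_P} and Definition 2, $r=-1$ is an eigenvalue of the polynomial. By item ii), the geometric multiplicity $\dim\ker\mL_{\text{DF}}(-1)$ equals the number of bipartite components.

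\emph{Main obstacle.} The hard step is the claim in iii) if "multiplicity" is read as algebraic multiplicity, meaning the order of the zero of $\det\mL_{\text{DF}}(r)$ at $r=-1$.

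To handle this, use the block decomposition again: $\det\mL_{\text{DF}}(r)$ factors over the components, so it suffices to show that a single connected bipartite component has a simple zero at $r=-1$. For a symmetric analytic family, the derivative of the vanishing eigenvalue is $\bs^T\mL_{\text{DF}}'(-1)\bs/\|\bs\|^2$. Here
\beq
\mL_{\text{DF}}'(-1)=-2(\mD-\mI)-\mA,
\eeq
and
\beq
\bs^T\mL_{\text{DF}}'(-1)\bs=-2\sum_i d_{ii}+2N+2\sum_{i<j}a_{ij}=N-\sum_i d_{ii}\cdot 1 \cdot\ldots
\eeq
which, after using $\bs^T\mA\bs=-\sum_i d_{ii}$, reduces to $2N-\sum_i d_{ii}=2N-2|\mathcal{E}|$ in the binary case.

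This quantity vanishes exactly for a connected graph with $|\mathcal{E}|=N$, which is an even cycle when bipartite. In that case the zero is not simple, so the claimed equality of algebraic multiplicity can fail for such components. I would therefore state and prove iii) for geometric multiplicity and remark on this exceptional case for algebraic multiplicity.
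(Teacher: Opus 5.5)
Your argument for i)--iii) is correct and follows essentially the paper's route. The paper defers i)--ii) to Prop.~2.1 of Cvetkovi\'c et al., whose proof is the same quadratic-form and sign-propagation argument on $\ker\mL_{\text{SL}}$ that you give, and it obtains iii) from $\mL_{\text{DF}}(-1)=\mL_{\text{SL}}$ together with (\ref{eq:Sigma_P}). Like yours, that argument only yields $\dim\ker\mL_{\text{DF}}(-1)$, i.e.\ geometric multiplicity, which is also what Prop.~3~d) states. Your caveat about algebraic multiplicity is therefore legitimate. For $C_4$ one has $\det\mL_{\text{DF}}(r)=(r-1)^2(r^2+1)^2(r+1)^2$, so $r=-1$ has algebraic multiplicity $2$ but geometric multiplicity $1$.

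The remark itself needs two repairs. First, the middle expression in your display is garbled. The correct chain is $\bs^T\mL_{\text{DF}}'(-1)\bs=-2\sum_i d_{ii}+2N-\bs^T\mA\bs=2N-\sum_i d_{ii}$. The cleanest link to the order of the zero is Jacobi's formula: on a connected bipartite component the adjugate of $\mL_{\text{SL}}$ is $c\,\bs\bs^T$ with $c\neq 0$, so the derivative of $\det\mL_{\text{DF}}(r)$ at $r=-1$ equals $c\,(2N-\sum_i d_{ii})$. Second, the exceptional set is larger than you say. A connected binary graph with $|\mathcal{E}|=N$ is unicyclic, not necessarily a cycle. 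Hence an even cycle with trees attached (e.g.\ $C_4$ plus a pendant vertex) also gives a double root at $-1$. In the weighted case the condition on a component $C$ is $\sum_{i\in C}d_{ii}=2|C|$. This already holds for a single edge of weight $2$, where $\det\mL_{\text{DF}}(r)=(r-1)^2(r+1)^2$.
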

 \begin{proof}The proofs of points i) and ii) follow similar arguments as those in Proposition  $2.1$ of \cite{cvetkovic2007}, while the proof of point iii) follows from the identity $\text{det}(\mL_{\text{SL}})=\text{det}(\mL_{\text{DF}}(-1))=0$ and from the definition of matrix polynomial eigenvalues in (\ref{eq:Sigma_P}).\end{proof}
 
The quadratic form associated with the signless Laplacian naturally induces a notion of total variation of a graph signal $\bx \in \mathbb{R}^N$. In particular,  we get \cite{kunegis2015exploiting}:
\beq \text{QTV}_{\text{SL}}(\bx)=\bx^T \mL_{\text{SL}} \bx=\sum_{i,j=1}^{N} a_{ij} (x_i+x_j)^2 \eeq
whose minimum zero value is achieved when $x_i=-x_j$ for all connected nodes $i,j$.\\
\begin{figure}[t] 
\centering
	\includegraphics[width=8.0cm,height=4.7cm]{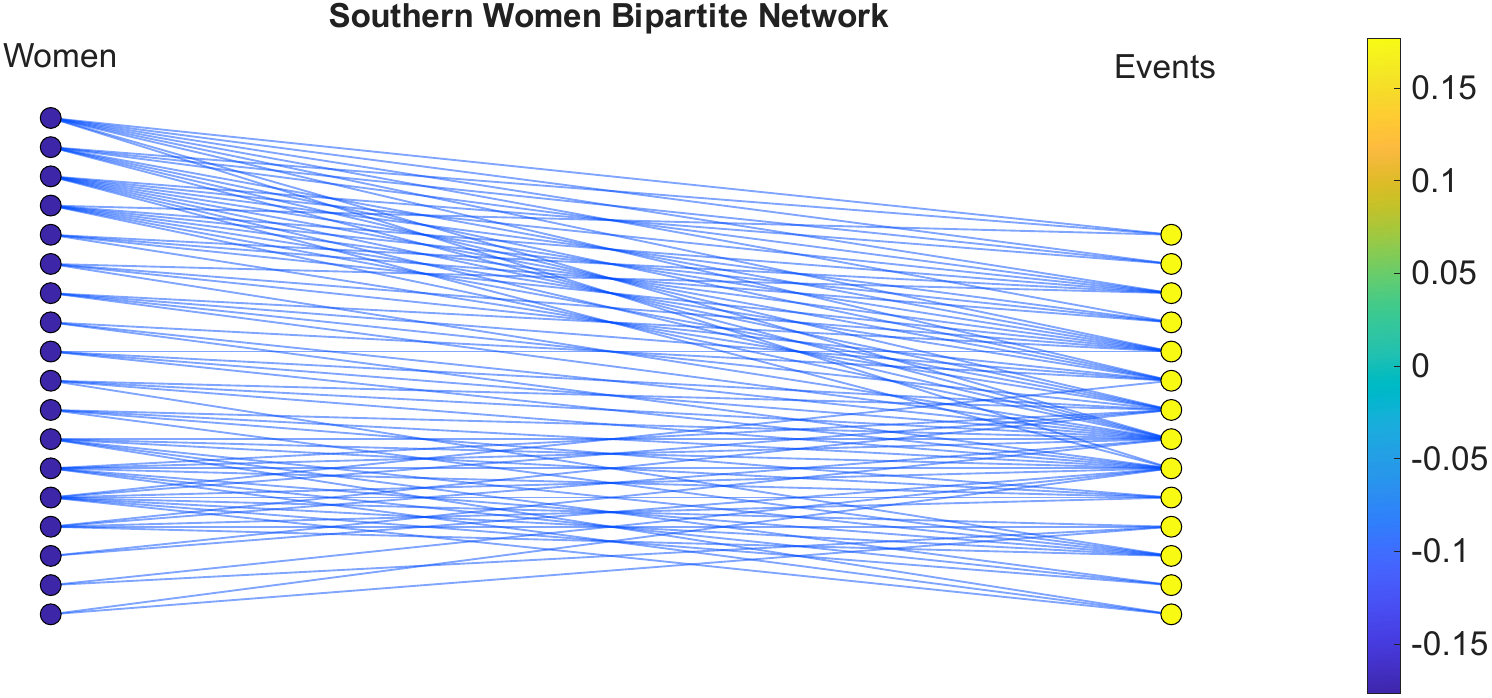}
	
    \caption{Eigenvector  of  $\mL_{\text{SL}}$ associate with the zero eigenvalue on the Southern Women bipartite network.}\label{fig:fig_bip}
\end{figure}
The bipartite structure of a graph can be recovered by examining the sign pattern of the eigenvector associated with the smallest zero eigenvalue of $\mL_{{\text{SL}}}$ 
\cite{goldberg2014sign},\cite{kunegis2015exploiting}. It is positive and constant on $\mathcal{V}_1$, and negative and
constant on $\mathcal{V}_2$. Then,   the sign pattern of the eigenvector associated with the zero eigenvalue reveals the bipartite components of the graph by assuming values of opposite sign over nodes belonging to different bi-partitions.

As an example, we consider the bipartite Southern Women network     \cite{womendataset}, \cite{davis1941deep}, as provided by the NetworkX library \cite{hagberg2007exploring}. The graph represents social participation patterns among a group of women in Natchez, Mississippi. The dataset encodes affiliation relationships between 18 women and 14 social events, forming a bipartite network, illustrated in Fig. \ref{fig:fig_bip}. An edge connects a woman to an event if she attended that event. 
 The signless Laplacian is semidefinite positive and has a single zero eigenvalue.
In Fig.  \ref{fig:fig_bip}  we report on the nodes the values of the first eigenvector of $\mL_{\text{SL}}$ for the bipartite graph. It can be noticed as the eigenvector assumes  negative  and constant values over the nodes in the women set and positive, constant values on those in the event set.\\ 
In \cite{goldberg2014sign}, it is shown that 
structural bipartiteness properties can be derived
also in the case where the smallest eigenvalue of $\mL_{\text{SL}}$ is small but nonzero.


\subsection{The Signed Laplacian} 
Let us now consider  undirected weighted graphs, where an edge with  positive weight denotes similarity between the signals observed over the vertices of the edge.  
To indicate dissimilarity or distance between data, 
in many applications the edges are labeled with negative coefficients. Such graphs for which the adjacency matrix has negative and
positive entries are called signed graphs. The associated  graph structure is captured by  the so called \textit{signed Laplacian}  \cite{kunegis2010spectral}, \cite{Matz_mag_2020},
\cite{gallier2016spectral}, defined as 
\beq 
\mL_{\text{S}}={\mD}_{\text{S}}-\mA_{\text{S}}.
\eeq
The deformed Laplacian matrix reduces to the signed Laplacian by setting $r=1$ in (\ref{eq:deformed_Lapl}), namely,
\beq 
\mL_{\text{S}} \equiv \mL_{\text{DF}}(1)={\mD}_{\text{S}}-\mA_{\text{S}}.
\eeq
Let us now define  a balanced graph. A graph is balanced if it can be split into two disjoint vertex sets $\mathcal{V}_1$ and $\mathcal{V}_2$ such that all edges between $\mathcal{V}_1$ and $\mathcal{V}_2$ are negative, whereas all edges within  $\mathcal{V}_1$ and $\mathcal{V}_2$ are positive \cite{Matz_mag_2020}. 
Then, the following properties hold.
\begin{proposition} 
\textit{For any graph, the following statements hold:
\begin{itemize}
\item[i)] The signed Laplacian is a strictly definite positive matrix and it is semidefinite positive if the graph is balanced;
\item[ii)] For balanced graph, $r=1$ is an eigenvalue of the matrix polynomial, i.e., $\text{det}(\mL_{\text{DF}}(1))=0$.
\end{itemize}}
 \end{proposition}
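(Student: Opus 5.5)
The plan is to read item i) in its standard form from the signed-graph literature \cite{kunegis2010spectral}, \cite{Matz_mag_2020}: for a connected graph, $\mL_{\text{S}}$ is always positive semidefinite; it is strictly positive definite when the graph is \emph{not} balanced, and singular (only semidefinite) when it is balanced. The literal wording of i) ("strictly definite positive" for any graph) cannot hold, since a balanced graph gives a singular $\mL_{\text{S}}$, so I will prove this corrected version. Everything will follow from an explicit expression for the quadratic form of $\mL_{\text{S}}$, in the spirit of (\ref{eq:quad_form_L}) and of $\text{QTV}_{\text{SL}}$.

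\textbf{Step 1 (quadratic form).} Using $d_{ii}=\sum_j |a_{ij}|$ and the symmetry of $\mA_{\text{S}}$, I will expand $\bx^T\mL_{\text{S}}\bx$ and group the terms edge by edge. This gives
\begin{equation*}
\bx^T \mL_{\text{S}}\bx=\sum_{\{i,j\}\in\mathcal{E}} |a_{ij}|\,\bigl(x_i-\text{sgn}(a_{ij})\,x_j\bigr)^2 ,
\end{equation*}
which is a one-line computation. Since every term is nonnegative, $\mL_{\text{S}}\succeq \mathbf{0}$ for every signed graph.

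\textbf{Step 2 (kernel characterization).} This is the main step. By Step 1, $\bx^T\mL_{\text{S}}\bx=0$ holds if and only if $x_i=\text{sgn}(a_{ij})x_j$ on every edge. Take a nonzero kernel vector $\bx$ on a connected graph.
\begin{itemize}
\item Propagating the constraint along paths shows that $|x_i|$ is constant and nonzero on all vertices.
\item Set $\mathcal{V}_1=\{i: x_i>0\}$ and $\mathcal{V}_2=\{i: x_i<0\}$. The constraint forces positive edges to join vertices of equal sign and negative edges to join vertices of opposite sign, so this partition is exactly a balanced split.
\item Conversely, if the graph is balanced with split $(\mathcal{V}_1,\mathcal{V}_2)$, the vector with $x_i=1$ on $\mathcal{V}_1$ and $x_i=-1$ on $\mathcal{V}_2$ satisfies every edge constraint, so $\mL_{\text{S}}\bx=\mathbf{0}$.
\end{itemize}
Hence $\mL_{\text{S}}$ is singular if and only if the graph is balanced, and positive definite otherwise. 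For disconnected graphs I will apply the argument component by component. The delicate point is only the bookkeeping for this case and for the degenerate split where one of $\mathcal{V}_1,\mathcal{V}_2$ is empty (the all-positive case, which recovers $\mL$ and the vector $\mathbf{1}$).

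\textbf{Step 3 (item ii).} By the identity $\mL_{\text{DF}}(1)=\mD_{\text{S}}-\mA_{\text{S}}=\mL_{\text{S}}$, Step 2 shows that for a balanced graph the signed $\pm1$ vector lies in $\ker\mL_{\text{DF}}(1)$. Therefore $\det(\mL_{\text{DF}}(1))=0$, and by (\ref{eq:Sigma_P}) $r=1$ is an eigenvalue of the matrix polynomial, with this vector as a right (and, by symmetry, left) eigenvector in the sense of (\ref{eq:def_eig}). This mirrors item iii) of the previous proposition, and likewise the multiplicity equals the number of balanced components.
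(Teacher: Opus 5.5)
Your proof is correct, and it differs from the paper's only on item i).

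For item ii) you follow the paper. Both arguments rest on $\mL_{\text{DF}}(1)=\mD_{\text{S}}-\mA_{\text{S}}=\mL_{\text{S}}$ and on the singularity of $\mL_{\text{S}}$ for balanced graphs. The difference is that you exhibit the $\pm1$ kernel vector explicitly, whereas the paper just asserts $\det(\mL_{\text{S}})=0$.

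For item i) the paper proves nothing and defers to Kunegis et al. You instead give a self-contained argument: the edge-wise quadratic form, then propagation of $x_i=\mathrm{sgn}(a_{ij})x_j$ along paths to characterize the kernel. This is essentially the standard argument behind the cited result. It also exposes a misstatement that the paper's citation leaves in place. Read literally, item i) says $\mL_{\text{S}}$ is strictly positive definite for every graph, which contradicts item ii). Your corrected reading is what the reference establishes and the only version consistent with ii): $\mL_{\text{S}}$ is always positive semidefinite, and positive definite exactly when the connected graph is unbalanced.

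Two points in the bookkeeping you deferred, neither affecting the statement itself:

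\emph{Disconnected graphs.} The dichotomy is that $\mL_{\text{S}}$ is positive definite if and only if no component is balanced. A graph with one balanced and one unbalanced component is unbalanced, yet $\mL_{\text{S}}$ is singular, so ``positive definite otherwise'' must not be stated for the whole graph. Item ii) is unaffected, because the $\pm1$ vector of a global balanced split lies in the kernel directly.

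\emph{Multiplicity.} Your closing remark holds only for the geometric multiplicity $\dim\ker\mL_{\text{DF}}(1)$. Consider the all-positive cycle $C_N$, which has one balanced component. There $\mL_{\text{DF}}(r)=r^2\mI-r\mA+\mI$, and the eigenvalue $2$ of $\mA$ contributes the factor $(r-1)^2$ to $\det\mL_{\text{DF}}(r)$. So the algebraic multiplicity of $r=1$ is $2$.
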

\begin{proof}
    The proof of point i) is provided in   \cite{kunegis2010spectral}. The  statement in point ii) follows from the identity $\mL_{\text{S}} = \mL_{\text{DF}}(1)$ and  from the fact that, for a balanced graph, $\text{det}(\mL_{\text{S}})=0$. This implies that $\text{det}(\mL_{\text{DF}}(1))=0$,  and hence $1$ is an eigenvalue of the deformed Laplacian.
\end{proof}

The quadratic form associated with the signed Laplacian can be interpreted  as a measure of graph signal smoothness \cite{Matz_mag_2020}, so that we can write 
\beq \text{QTV}_{\text{S}}(\bx)=\bx^T \mL_{\text{S}} \bx= \frac{1}{2} \ds \sum_{i,j=1}^{N} |a_{ij}| (x_i-\text{sign}(a_{ij})x_j)^2. \eeq
Note that $\text{QTV}_{\text{S}}(\bx)$ is small when  node signal values are similar across positively weighted edges (i.e., $x_i \approx  x_j$ for $a_{ij}> 0$)  and dissimilar across negatively weighted edges (i.e., $x_i \approx - x_j$ for $a_{ij} < 0$).
The  smallest signed Laplacian eigenvalue measures the level of balance in a graph: the larger its value, the more unbalanced the network. Then, in a balanced graph, the sign pattern of the eigenvector corresponding to the smallest (zero) eigenvalue identifies disjoint vertex sets.

As an illustrative example,  Fig. \ref{fig:fig_sign}  shows a signed balanced graph with  two sets of nodes, each composed by $10$ nodes.
The  green links within each cluster are positively weighted, while the red edges are negatively weighted. On the nodes we represent the eigenvector associated with the first zero eigenvalue of $\mL_{\text{S}}$, which assumes opposite values $\pm 0.22$. Notably, the sign pattern of this eigenvector clearly identifies the two disjoint node sets.
\begin{figure}[t] 
\centering	\includegraphics[width=8.0cm,height=4.7cm]{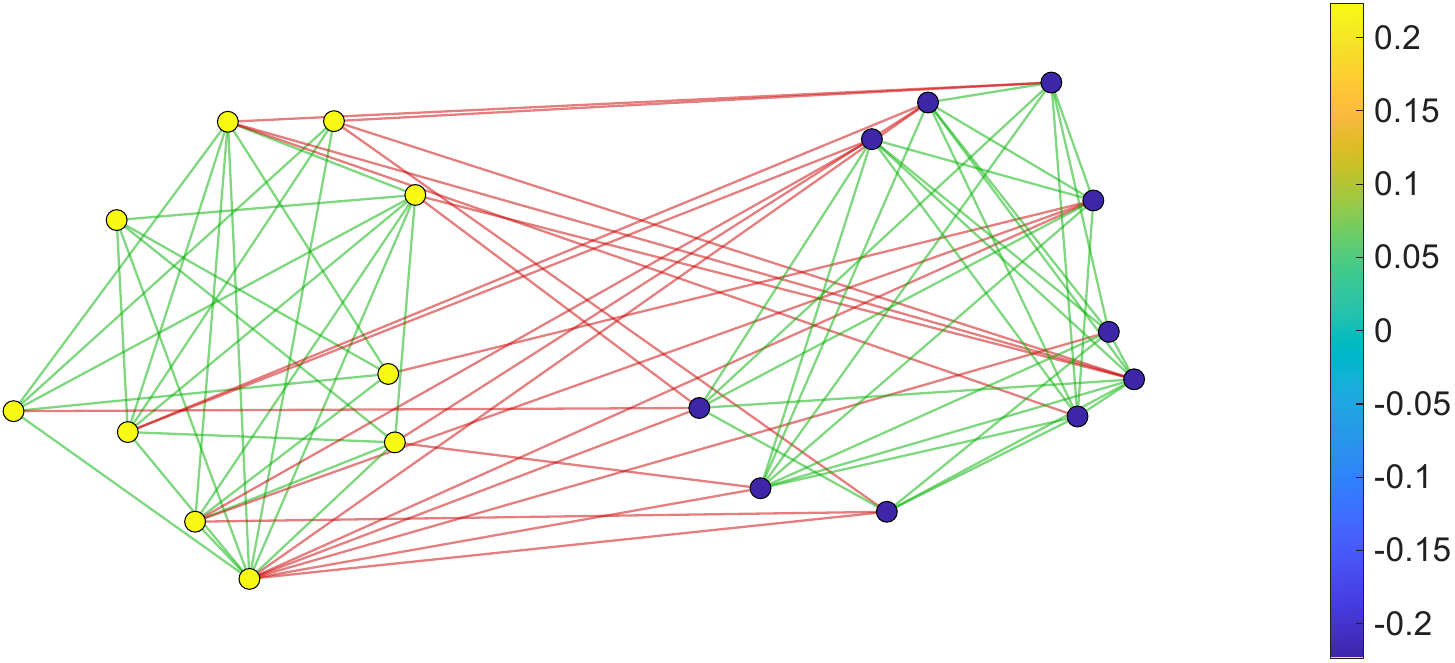}	
    \caption{First eigenvector  of  $\mL_{\text{S}}$ associate with the zero eigenvalue.}\label{fig:fig_sign}
\end{figure}

\section{Spectral properties of the Deformed Laplacian}
In this section, we introduce some basic spectral properties of the deformed graph Laplacian as presented in \cite{grindrod2018deformed}[Prop. 4.4] and extended here to positively weighted graphs.
\begin{proposition} \textit{Consider the deformed Laplacian defined in \eqref{eq:deformed_Lapl}, and assume that the graph has positive edge weights. Then, the following properties hold:
\begin{itemize}
    \item[a)] $\mL_{\text{DF}}(r)$ is a regular matrix polynomial, and $0$ is never an eigenvalue of $\mL_{\text{DF}}(r)$;
     \item[b)] $1$ is always an eigenvalue of $\mL_{\text{DF}}(r)$, with geometric multiplicity equal to the
number of connected components of the graph of $\mA$;
\item[c)] The geometric multiplicity of $\infty$ as an eigenvalue of $\mL_{\text{DF}}(r)$ is equal to the number  vertices of degree $1$, in the graph of $\mA$. Then, $\infty$ is
an eigenvalue of $\mL_{\text{DF}}(r)$ if and only if the graph of $\mA$ has at least one vertex of degree $1$;
\item[d)] $-1$ is an eigenvalue of $\mL_{\text{DF}}(r)$ if and only if the graph of $\mA$ has at least one
bipartite component. In this case, the geometric multiplicity of $-1$ is equal to
the number of bipartite components of the graph of $\mA$.
\end{itemize}}
\end{proposition}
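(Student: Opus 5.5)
Throughout I would use the paper's definition of eigenvalues of a matrix polynomial, \eqref{eq:Sigma_P} together with \eqref{eq:def_eig}, and take the geometric multiplicity of an eigenvalue $r_0$ to be $\dim\ker \mL_{\text{DF}}(r_0)$. In every item the plan is to evaluate $\mL_{\text{DF}}(r)$ at the relevant point, identify the resulting matrix with a known Laplacian form, and then compute its kernel. For an infinite eigenvalue I would use the reversal polynomial described in the footnote to Definition 1.

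\textbf{Item a).} Since $\mL_{\text{DF}}(0)=\mI$ (the comonic property noted after Definition 3), we get $\det(\mL_{\text{DF}}(0))=1\neq 0$. Hence $\det(\mL_{\text{DF}}(r))\not\equiv 0$, so the polynomial is regular by Definition 1, and $0\notin\boldsymbol{\Sigma}(\mL_{\text{DF}})$.

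\textbf{Items b) and d).} These reduce to kernel computations at $r=\pm 1$.
\begin{itemize}
\item[$\bullet$] At $r=1$ we have $\mL_{\text{DF}}(1)=\mD-\mA=\mL$. With positive weights, $\bx^T\mL\bx=\tfrac12\sum_{i,j}a_{ij}(x_i-x_j)^2$. This vanishes if and only if $x_i=x_j$ across every edge, i.e.\ $\bx$ is constant on each connected component. So $\ker\mL$ is spanned by the component indicator vectors, and its dimension is the number of connected components. Since this number is at least one, $1$ is always an eigenvalue.
\item[$\bullet$] At $r=-1$ we have $\mL_{\text{DF}}(-1)=\mD+\mA=\mL_{\text{SL}}$, with $\bx^T\mL_{\text{SL}}\bx=\tfrac12\sum_{i,j}a_{ij}(x_i+x_j)^2$. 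A kernel vector must satisfy $x_i=-x_j$ on every edge, and the problem decouples over components. On a given component, propagating this rule along paths shows that $|x|$ is constant and that the sign flips along every edge. A nonzero solution therefore exists if and only if the component has no odd cycle, i.e.\ it is bipartite. In that case the solution is unique up to scaling, namely $\pm1$ on the two sides. This is exactly Proposition 1 ii), re-derived here for positive weights, and it also covers isolated vertices, which count as trivially bipartite.
\end{itemize}
Since $\mL$ and $\mL_{\text{SL}}$ are positive semidefinite, $\bx^T\mL\bx=0$ is equivalent to $\mL\bx=\mathbf{0}$ (and likewise for $\mL_{\text{SL}}$). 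So in both bullets the quadratic-form characterization identifies the kernel exactly.

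\textbf{Item c).} I would pass to the reversal polynomial
\[
s^2\mL_{\text{DF}}(s^{-1})=(\mD-\mI)-\mA s+\mI s^2 .
\]
By the footnote to Definition 1, $\infty$ is an eigenvalue of $\mL_{\text{DF}}$ exactly when $s=0$ is an eigenvalue of this reversal, and its geometric multiplicity is $\dim\ker(\mD-\mI)$. Because $\mD-\mI$ is diagonal, this kernel is spanned by the canonical vectors $\mathbf{e}_i$ with $d_{ii}=1$. Its dimension is therefore the number of vertices of (weighted) degree $1$, and it is nonzero if and only if at least one such vertex exists. As a consistency check, one can verify that $\deg\det\mL_{\text{DF}}(r)<2N$ precisely when $\det(\mD-\mI)=0$, which agrees with Definition 1 ii).

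\textbf{Expected obstacle.} The main obstacle is conceptual rather than computational. The claim in c) requires reading "degree $1$" as $d_{ii}=1$ in the weighted setting, and using geometric multiplicity at $\infty$ in the reversal sense, since $\ker\mL_{\text{DF}}(\infty)$ is not otherwise defined.
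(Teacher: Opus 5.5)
Your proposal is correct and follows essentially the same route as the paper. The paper defers a), b) and d) to Prop.~4.4 of \cite{grindrod2018deformed}, whose argument is exactly the evaluation of $\mL_{\text{DF}}(r)$ at $r=0$ and $r=\pm 1$ (yielding $\mI$, $\mD-\mA$ and $\mD+\mA$) that you carry out, and for c) the paper uses $\dim\ker(\mD-\mI)$, which is what your reversal-polynomial computation at $s=0$ gives; you additionally write out the kernel computations the paper leaves to the citation, including the odd-cycle argument for $\mD+\mA$ and the weighted reading $d_{ii}=1$ of ``degree $1$''.
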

\begin{proof}
The proof of points a), b) and d) follows the same arguments of the proof of  Prop. 4.4  in \cite{grindrod2018deformed}. The statement in  c) relies on the fact that the geometric multiplicity of $\infty$ is equal to $\text{dim} (\text{ker}(\mD-\mathbf{I}))$.  Consequently, $\infty$ is an eigenvalue if and only if  there exists at least one node with weighted degree $d_{ii}=1$ and  its multiplicity is given by the number of vertices with weighted degree equal to $1$.
\end{proof}\\
Considering binary symmetric adjacency matrices from Theorem 4.8 in \cite{grindrod2018deformed}, it follows that for each finite eigenvalue $\lambda \in \mathbb{C}$ of $\mL_{\text{DF}}(r)$, we get $|\lambda|<1$.\\
Assuming that the graph is signed, we can easily prove the following proposition.
\begin{proposition}
    \textit{Consider the deformed Laplacian in (\ref{eq:deformed_Lapl}) and assume that the adjacency matrix is signed. Then, the following properties hold:
\begin{itemize}
    \item[a)] $\mL_{\text{DF}}(r)$ is a regular matrix polynomial, and $0$ is never an eigenvalue of $\mL_{\text{DF}}(r)$;
     \item[b)] The graph is balanced if and only if  $\lambda=1$ is an eigenvalue of the deformed Laplacian. The multiplicity of $\lambda=1$ is the number of balanced components of the graph.
    \end{itemize}}
    \end{proposition}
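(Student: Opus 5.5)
For part a), I will evaluate the deformed Laplacian in \eqref{eq:deformed_Lapl} at the origin. Since $\mL_{\text{DF}}(0)=\mI$, the polynomial is comonic, so $\det(\mL_{\text{DF}}(0))=1\neq 0$. Hence $\det(\mL_{\text{DF}}(r))$ is not identically zero, and the polynomial is regular by Definition 1. The same identity shows that $r=0$ is not in $\boldsymbol{\Sigma}(\mL_{\text{DF}})$ as defined in \eqref{eq:Sigma_P}. This step uses neither the sign of the weights nor the choice $\bar{\mD}=\mD_{\text{S}}$, so it carries over verbatim from the positively weighted case.

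For part b), I will reduce everything to the signed Laplacian. Setting $r=1$ gives $\mL_{\text{DF}}(1)=\mD_{\text{S}}-\mA_{\text{S}}=\mL_{\text{S}}$. Thus $1\in\boldsymbol{\Sigma}(\mL_{\text{DF}})$ iff $\det(\mL_{\text{S}})=0$. Its geometric multiplicity equals $\dim\ker(\mL_{\text{S}})$, i.e. the multiplicity of the zero eigenvalue of $\mL_{\text{S}}$. It therefore suffices to show that the kernel of $\mL_{\text{S}}$ has dimension equal to the number of balanced connected components. I will proceed as follows.

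\emph{Block structure.} I order the vertices by connected components, so that $\mL_{\text{S}}$ is block diagonal. Then $\ker\mL_{\text{S}}$ is the direct sum of the kernels of the blocks, and it is enough to treat a single connected component.

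\emph{Vanishing of the quadratic form.} On a connected component, I use the quadratic form $\bx^T\mL_{\text{S}}\bx=\frac12\sum_{i,j}|a_{ij}|(x_i-\text{sign}(a_{ij})x_j)^2\ge 0$. Since $\mL_{\text{S}}$ is positive semidefinite, $\mL_{\text{S}}\bx=\b0$ iff this form vanishes. That happens iff $x_i=\text{sign}(a_{ij})x_j$ on every edge.

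\emph{Balanced component gives a one-dimensional kernel.} If the component is balanced with partition $\mathcal{V}_1,\mathcal{V}_2$, the vector with $x_i=1$ on $\mathcal{V}_1$ and $x_i=-1$ on $\mathcal{V}_2$ satisfies all edge constraints. By connectivity, any solution is determined by its value at a single vertex, so the kernel is exactly one-dimensional.

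\emph{Nonzero kernel forces balance.} Conversely, let $\bx\neq\b0$ be in the kernel. By connectivity $|x_i|$ is constant and nonzero on the component. Setting $\mathcal{V}_1=\{i:x_i>0\}$ and $\mathcal{V}_2=\{i:x_i<0\}$, positive edges join vertices of equal sign and negative edges join vertices of opposite sign. This is exactly balance.

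Summing over components gives multiplicity equal to the number of balanced components. In particular $1$ is an eigenvalue iff at least one component is balanced, which reduces to ``the graph is balanced'' when the graph is connected. I will state the result under that reading.

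The main obstacle is this interpretation of ``balanced'' together with the notion of multiplicity. The argument above yields geometric multiplicity. Proving equality with the algebraic multiplicity, i.e. the order of $r=1$ as a zero of $\det\mL_{\text{DF}}(r)$, needs more work. I would first try to show that no Jordan chain exists. Take $\bu\in\ker\mL_{\text{S}}$ and the derivative $\mL_{\text{DF}}'(1)=2(\mD_{\text{S}}-\mI)-\mA_{\text{S}}$. A chain of length two requires $\bu^T\mL_{\text{DF}}'(1)\bu=0$, since $\mL_{\text{S}}$ is symmetric. Using $\mA_{\text{S}}\bu=\mD_{\text{S}}\bu$, this quantity equals $\bu^T\mD_{\text{S}}\bu-2\bu^T\bu$. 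It is nonzero unless the average degree on the component, weighted by $u_i^2$, equals $2$, and this is the step where the argument may break down. If it fails, I will state the claim for geometric multiplicity, consistent with how Proposition 3 b) is phrased.
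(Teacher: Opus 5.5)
Your proposal is correct and follows the paper's route: $\mL_{\text{DF}}(0)=\mI$ for a), and for b) the identification $\mL_{\text{DF}}(1)=\mL_{\text{S}}$ followed by counting $\dim\ker(\mL_{\text{S}})$. The only difference is that you prove the kernel-dimension fact via the signed quadratic form, where the paper simply cites Kunegis et al.

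Both of your caveats are justified, and the paper's proof skips over them. First, the converse in b) needs connectivity, since $1$ is an eigenvalue as soon as a single component is balanced. Second, the multiplicity must be read as geometric. For the trivially balanced cycle $C_N$ with unit positive weights, $\det\mL_{\text{DF}}(r)=\prod_{k=0}^{N-1}\bigl(r^2-2\cos(2\pi k/N)\,r+1\bigr)$, whose $k=0$ factor is $(r-1)^2$. So the algebraic multiplicity of $1$ is $2$ while $\dim\ker\mL_{\text{DF}}(1)=1$, which is exactly the average-degree-$2$ case your Jordan-chain computation singles out.
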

    \begin{proof}
    The proof of point a) is straightforward, since evaluating $
\mL_{\text{DF}}(r)=(\bar{\mD}-\mI) r^2 -\bar{\mA} r +\mI$  at $r=0$ gives $
\mL_{\text{DF}}(0)=\mI$, and then, $\text{det}(\mL_{\text{DF}}(0))=1$. This implies from (\ref{eq:Sigma_P}), that $0$ is never an eigenvalue of $\mL_{\text{DF}}$. \\
To prove point b), assume that the graph is balanced. Then there exists a vector $\bu$ with entries $\pm 1$ such that $\mL_{\text{S}} \bu=\mathbf{0}$.
Since $\mL_{\text{DF}}(1)=\mL_{\text{S}}$, it follows that $\mL_{\text{DF}}(1)\bu=\mathbf{0}$, and 
$\text{det}(\mL_{\text{DF}}(1))=\text{det}(\mL_{\text{S}})=0$.
Therefore, $\lambda=1$ is also an eigenvector of the  deformed Laplacian. Conversely, suppose that there exists a nonzero vector
$\bu$  such that $\mL_{\text{DF}}(1)\bu=\mathbf{0}$. Then,  $\mL_{\text{S}}\bu=\mathbf{0}$, which implies that the graph is balanced.
Finally, the multiplicity of $\lambda=1$ as an eigenvalue of $\mL_{\mathrm{DF}}$ is equal to
\beq
\dim( \ker\bigl(\mL_{\mathrm{DF}}(1)\bigr))
= \dim (\ker(\mL_{\text{S}})).
\eeq
Since the dimension of the kernel of the signed Laplacian equals the number of balanced connected components of the graph \cite{kunegis2010spectral}, it follows that the multiplicity of $\lambda=1$ is exactly the number of balanced connected components.
    \end{proof}


\section{Spectral Signal representation via deformed graph Laplacians} 
 
In graph signal processing (GSP), spectral methods \cite{chung1997spectral,Moura2014} are built upon a graph shift operator (GSO), with the combinatorial graph Laplacian $\mL$
 being a common choice.
Given the eigendecomposition $\mL=\mU \boldsymbol{\Lambda}\mU^T$, it is well known that a suitable basis to represent signals defined over the vertices of a graph is given by the eigenvectors $\mU$. Specifically,
the Graph Fourier Transform (GFT) ${\bs}$ of a graph signal $\bx$ is defined as the projection of $\bx$ on the space spanned by the eigenvectors of $\mL$, i.e. \cite{Shuman2013}, \cite{Moura2014}
\beq {\bs}=\mU^T \bx. \eeq
 Therefore, the graph signal admits the spectral representation 
$\bx=\mU {\bs}.$
A graph signal is  band-limited with bandwidth $K$ if  its GFT ${\bs}$ is a $K$-sparse vector. In this case,  the signal can be represented as  a linear combination of  $K$ eigenvectors as 
\beq
\bx=\mU_{\mathcal{K}} \bs_{\mathcal{K}} 
\eeq
where $\mU_{\mathcal{K}}$ contains as columns the eigenvectors indexed by the set $\mathcal{K}$ and 
$\bs_{\mathcal{K}}$ collects the corresponding GFT coefficients, with $| \mathcal{K}|=K$.

Spectral graph theory represents  a suitable  framework to extract features from  graphs via the eigenvectors of the Laplacian matrix. However, the signal representation strongly depends on the Laplacian form adopted to analyze the observed data. This, in turn, requires prior knowledge of the graph structure:  we use the combinatorial Laplacian when the graph contains clusters, the signless Laplacian if the graph is bipartite and the signed Laplacian when the edge weights are signed and there are antagonist sets. However, such prior information is often unavailable, or  graphs can be less structured, for instance  containing  both clusters and bipartite components. In such cases, we may learn from data the appropriate Laplacian form using the deformed graph Laplacian. 
Therefore, the knowledge of $\mL_{\text{DF}}(r)$  enables the representation of  node signals in terms of its eigenvectors. We define the Deformed GFT of $\bx$ as the projection of $\bx$ onto the eigenvectors $\mU(r)$ of the deformed 
Laplacian $\mL_{\text{DF}}(r)$.
Therefore, the Deformed GFT (DGFT) ${\bs}(r)$ of the node signal $\bx$ is defined as \beq {\bs}(r)=\mU(r)^T \bx \eeq
while the Inverse Deformed GFT is given by 
${\bx}(r)=\mU(r) {\bs}(r)$. Then, the signal ${\bx}$ is $K$-bandlimited on the graph modeled by the DGL if it  admits the sparse spectral representation: 
\beq
{\bx}(r)=\mU_{\mathcal{K}}(r) {\bs}_{\mathcal{K}}(r).
\eeq

As numerical example, we consider the graph signal $\bx$ represented on  the node of the graph shown in Fig.~\ref{fig:fig_graph_mix}. The graph consists of $N=11$ nodes and exhibits a mixed structure: it is quasi-bipartite, with two main components comprising nodes $1-5$ and $6-11$, respectively. Moreover, two prominent clusters can be identified, namely $\{1,2,3,6,7,8\}$ and the cluster with the complementary set of nodes.

In Fig.~\ref{fig:fig_nmse}, we report the normalized mean squared error
$\mathrm{NMSE}(r)=\frac{\|\bx - \hat{\bx}(r)\|}{\|\bx\|}$ versus the parameter $r$.
The reconstructed signal $\hat{\bx}(r)$ is obtained by projecting  $\bx$ onto the subspace spanned by the  eigenvectors $\mU(r)$ and retaining the $K=4$ largest DGFT coefficients $\bs_{\mathcal{K}}(r)$, with $|\mathcal{K}|=4$. Accordingly, we recover the band-limited signal
$\hat{\bx}(r)=\mU_{\mathcal{K}}(r)\,\bs_{\mathcal{K}}(r)$.
From Fig.~\ref{fig:fig_nmse}, it can be observed that the NMSE attains its minimum at $r=0.1$. This indicates that the spectral bases corresponding to $r=-1$ (combinatorial Laplacian) and $r=1$ (signless Laplacian) yield a higher reconstruction error, and therefore do not provide the most efficient representation of the smooth, bandlimited graph signal in this example.
The deformed Laplacian, instead, achieves a lower reconstruction error by adapting the spectral basis to the observed graph signal through the parameter $r$.
\begin{figure}[t] 
\centering	\includegraphics[width=8.0cm,height=5.5cm]{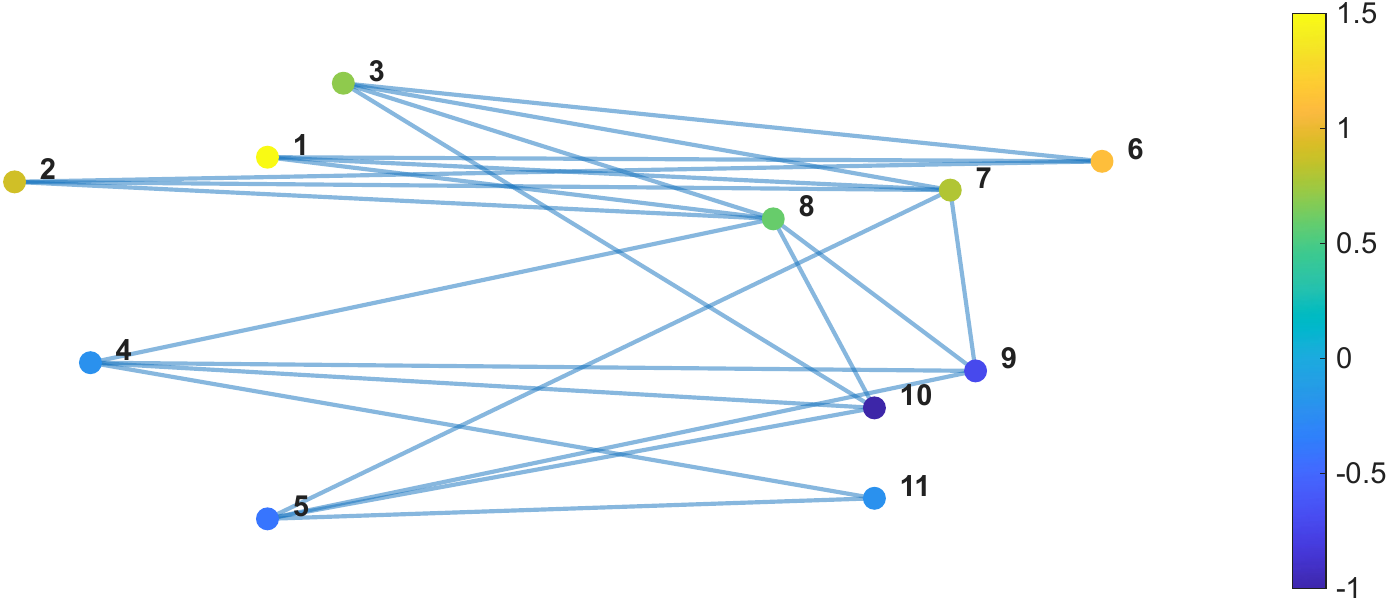}	
    \caption{Node signal on a quasi-bipartite, clustered graph.}\label{fig:fig_graph_mix}
\end{figure}
\begin{figure}[t] 
\centering	\includegraphics[width=8.8cm,height=5cm]{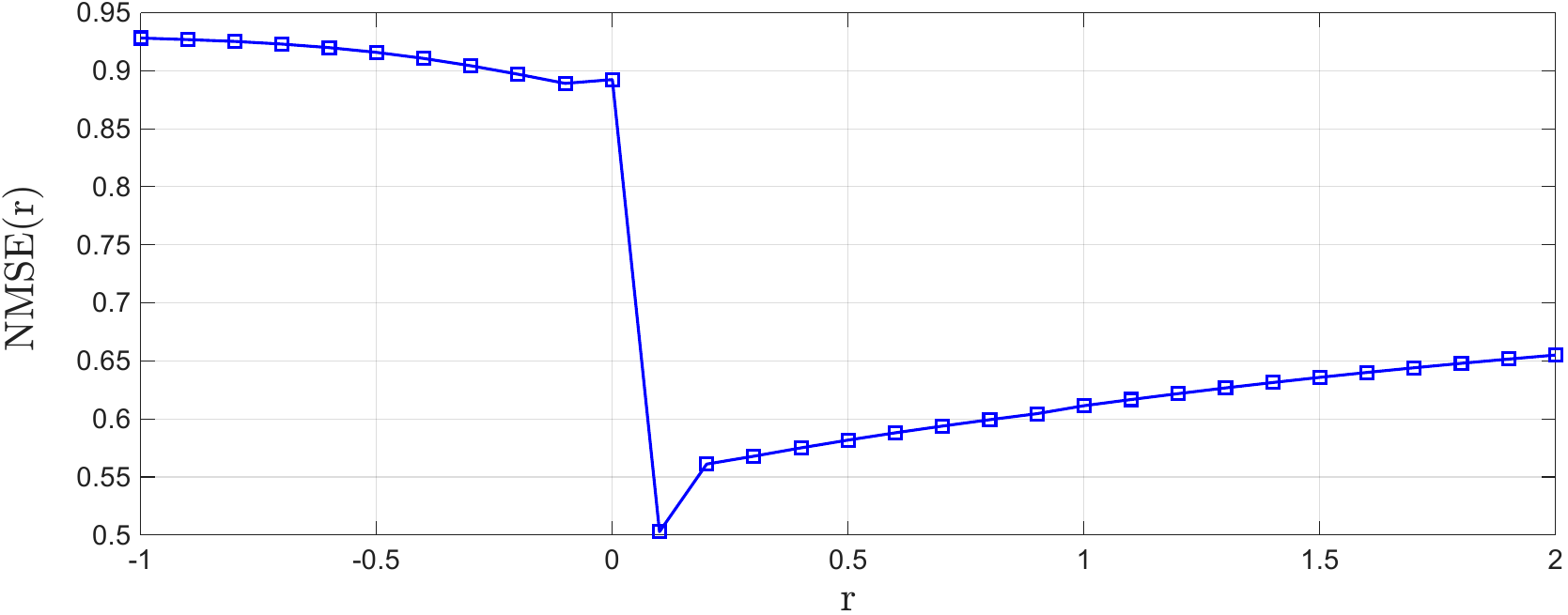}	
    \caption{Signal recovering error versus the parameter $r$ with  $K=4$.}\label{fig:fig_nmse}
\end{figure}



\section{Joint learning of Laplacian form and graph signals}

The deformed Laplacian is a parametric graph operator that enables an adaptive spectral representation of graph signals through a data-driven selection of the parameter $r$.
In this section we propose a strategy for jointly learning the deformed Laplacian form and a sparse spectral representation of the signals. Assume that a set of $M$ graph signals $\bx(i)$ are observed and stacked as columns of the  matrix $\mX=[\bx(1), \ldots, \bx(M)] \in \mathbb{R}^{N\times M}$. Let us define with $\mS=[\bs(1), \ldots, \bs(M)]$  the matrix collecting  the corresponding  DGFT vector $\bs(i)$. \\ Our goal is to find the optimal trade-off between the signal smoothness and the data-fitting error in order to jointly infer the graph Laplacian form and the sparse signal representation.
Hence, we formulate the following SDP problem:
\beq \nonumber
\begin{array}{lll}
 \underset{\substack{\mS \in \mathbb{R}^{N}, r \in \mathbb{R}\\ \mU(r) \in \mathbb{R}^{N\times N }}}{\min}  
 \!   \!\!\!\!\!\!\!\! \!\!\!\!\!\!\!\! &  \!\!f(r) \!\!= \!(1-\gamma)\text{tr}(\mX^T \mL_{\text{DF}}(r) \,\mX) \!+  \!\gamma \! \parallel \mX -\mU(r) \mS \parallel_{F}^{2} \medskip\\ 
 \qquad\text{s.t.} &  \text{(a)}\; \mL_{\text{DF}}(r)=\mU(r) \boldsymbol{\Lambda}(r)\mU(r)^T, \; \mU(r)^T\mU(r)=\mathbf{I},\medskip\\    & \text{(b)}\; \Lambda_{ij}(r)=0, \forall i \neq j, \, \boldsymbol{\Lambda}(r)\succeq \mathbf{0},\medskip\\
     & \text{(c)}\; \parallel \bs(i) \parallel_{0}= K, \; i=1,\ldots,M \qquad \qquad (\mathcal{P})
\end{array}
\eeq
where: i) the first term in the objective function promotes smoothness in the representation of the observed signal, while  the second term ensures that the graph signal representation minimizes the data-fitting error; ii)  constraints (a) and (b) impose that the deformed Laplacian admits an eigendecomposition leading to a positive semidefinite matrix; finally, iii)  constraint (c) enforces that the graph signals  are $K$-bandlimited. The coefficient $\gamma \in [0,1]$ is introduced to control the trade-off between smoothness and data-fitting error.

Problem $\mathcal{P}$ is non-convex  and to solve it  we propose an iterative efficient sub-optimal algorithm finding a solution  via a line search over the $r$ values.
Since many  Laplacians of interest arise at $r=1$ and $r=-1$, we restrict the range of values of $r$ within the real interval $[-\alpha, \alpha]$ with $\alpha\geq 1$.
Hence, for $0 \leq \gamma\leq 1$ we iteratively solve problem $\mathcal{P}$ by fixing $r$. Given the parameter $r$,  the eigenvectors $\mathbf{U}(r)$ of the associated deformed Laplacian can be calculated, and $\mathcal{P}$ admits a closed form solution  obtained by considering the projection $\hat{\bs}(i)=\mU(r)^T \bx(i)$ and keeping the $K$ largest coefficients in magnitude.

Specifically, the algorithm works as illustrated in Alg. \ref{algorithm:Alg_1}. Given as input the graph signal matrix $\mX$, the adjacency matrix $\mA$,  and an initial value $r_0$, the deformed Laplacian $\mathbf{L}_{\text{DF}}(r_0)$ is computed in Step $1$, and its eigendecomposition is derived in Step $2$. Hence, in Step $3$, we check whether the corresponding DGL is semidefinite positive. If $\mathbf{L}_{\text{DF}}(r_0)\succeq \mathbf{0}$, the optimal graph signal solving  $\mathcal{P}$ is derived by retaining the $K$ largest coefficients in magnitude of  $\hat{\bs}(i)=\mU(r_0)^T \bx(i)$. Then, the corresponding value of the objective function $f(r_0)$ is evaluated and compared with the current minimum $f_{min}$. This procedure is repeated by varying $r$ up to a maximum value $\alpha$, and the final optimal value $r^{\star}$ is obtained. Then, the sparse signal matrix  is derived as $\hat{\mX}=\mathbf{U}_{\mathcal{K}}(r^{\star})\mS_{\mathcal{K}}$. \\
In the next sections, we assess the effectiveness of the proposed method on both synthetic and real datasets.

\begin{algorithm}[t]

    \quad  {\textbf{Data:}} $\mX$, $\mA$, $\mD$.  {Set}  $n=0$, ${r}_0=-\alpha=-1$,
    $0\leq \epsilon \leq 1$ {and}\\ \makebox[1.2cm][l]{} $0 \leq \beta \ll 1$, $\text{f}_{min}=g$,    
     with $g \gg 1$

     \quad {\textbf{Output:}} $r^{\star}$, $\mL_{\text{DF}}(r^{\star})$, $x^{\star}=\mU(r^{\star}) \bs^{\star}$

    \quad  {Repeat} {until} $|r_n -\alpha| \leq \epsilon$

      \quad \quad  1. {Compute} $\mL_{\text{DF}}(r_n)$

       \quad \quad  2. {Eigendecomposition}  $\mL_{\text{DF}}(r_n)=\mU(r) \boldsymbol{\Lambda}(r) \mU(r)^T$ 
      
     \quad \quad   3. {if} $\mL_{\text{DF}}(r_n)\succeq \mathbf{0}$ 
      
     \qquad \quad  - solve $\mathcal{P}$ by fixing $r_n$ and  find\\ \makebox[1.2cm][l]{}  $\hat{\bx}_n(i)=\mathbf{U}_{\mathcal{K}}(r_n)\bs_{\mathcal{K},n}(i)$, $i=1,\ldots,M$  
     
     \qquad \quad  - {if} $f(r_n) \leq \text{f}_{min}$ 
     
   \qquad \quad  - $\text{f}_{min}=f(r_n)$,  $r^{\star}=r_n$

 \quad \quad end, end 
 
\quad \quad  4. $n=n+1$, $r_n=r_{n-1}+n \beta$  


\quad \quad  end
   \quad 
   \caption{}
 \label{algorithm:Alg_1}
\end{algorithm}

\begin{figure*}[t] 
\centering
	\includegraphics[width=15cm,height=5.0cm]{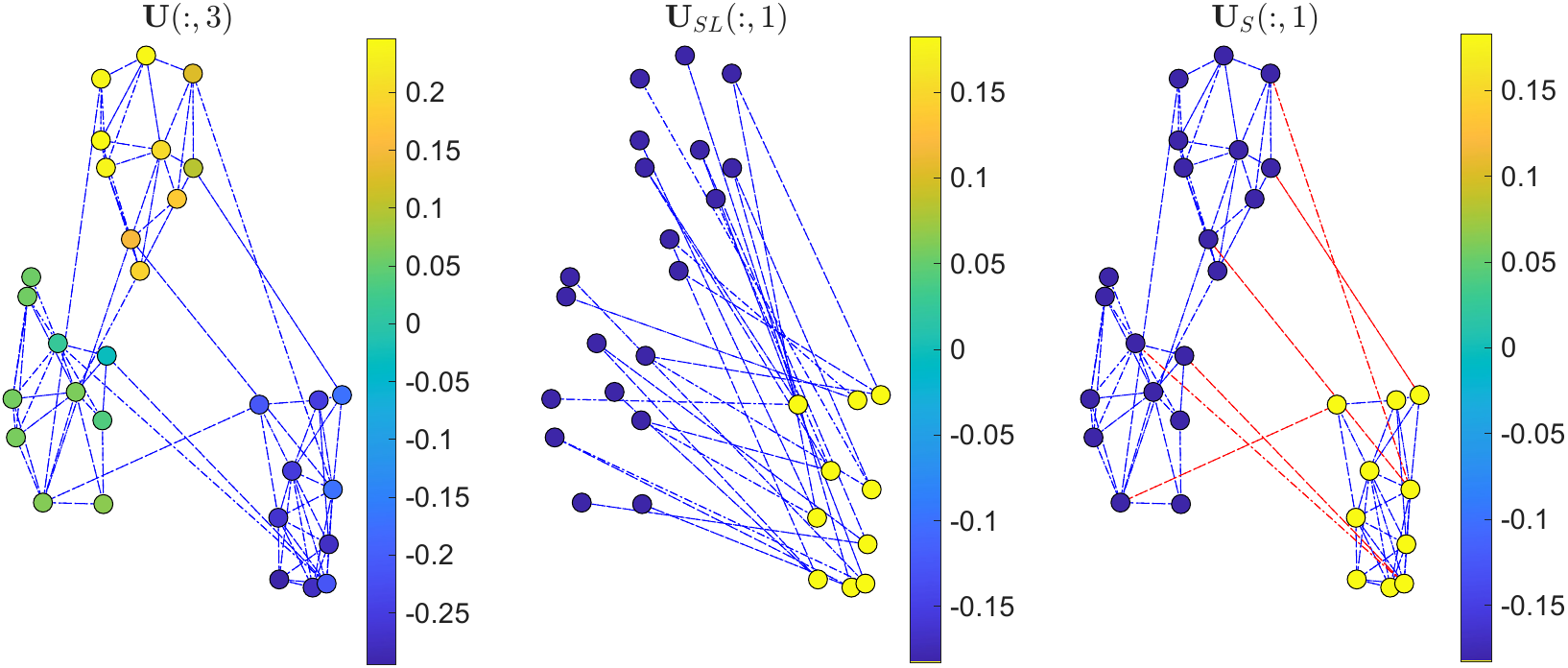}
	
    \caption{Eigenvectors of  $\mL$ (left), $\mL_{\text{SL}}$ (middle) and   $\mL_{\text{S}}$ (right).}\label{fig:fig_1}
\end{figure*}

\subsection{Synthetic Data Experiments}

As  numerical tests to evaluate the performance of the proposed framework, we first construct synthetic datasets. Specifically, we generate  the three graphs illustrated in Fig. \ref{fig:fig_1}, namely  a clustered graph, a bipartite, and a (balanced) signed graph,  covering representative graph structures of interest in many applications. These graphs are algebraically characterized by the combinatorial, signless, and signed Laplacians, respectively.

In the left plot of Fig. \ref{fig:fig_1}, we represent as nodal signal the values of the third eigenvector   of $\mL$, which is smooth within clusters. In the middle plot, we report the first  eigenvector of $\mathbf{L}_{\text{SL}}$ which takes opposite values across the two bipartite components. Finally, in the right plot, assigning  adjacency entries $-1$ to the red links and $1$ to the remaining active edges, we illustrate the first eigenvector of  $\mathbf{L}_{\text{S}}$ that is smooth within each component of the graph.
For each graph, $M$ random smooth signals are generated as  $\bx(i)=\mU_{\mathcal{K}}\, {\bs}_{\mathcal{K}}(i)$, $i=1,\ldots,M$,   using as columns of  $\mU_{\mathcal{K}}$ the first $K=3$ eigenvectors of  the combinatorial,  signless  and signed
Laplacians, respectively. 
Then, by running Algorithm $1$,  we illustrate in Fig. \ref{fig:fig_2}  the optimal value  $\bar{r}^{\star}$ as a function of  $\gamma$, averaged over $M=50$ random graph signal realizations. It can be observed that,  for both the clustered  and the signed graphs,  as $\gamma$ tends to $1$, the averaged optimal parameter value approaches $\bar{r}^{\star}=1$, since in this case the data-fitting error is the term prevailing in the optimization strategy. Conversely, as $\gamma$ tends to $0$, the method tends to select a Laplacian form that promotes smoothness of the observed signals. From Fig. \ref{fig:fig_2}, we also observe that, for the bipartite graph, the method is robust with respect to $\gamma$, as we get a constant value $\bar{r}^{\star}=-1$.

To better investigate the benefits in learning the deformed Laplacian form, we considered a dynamic graph whose topology evolves over a window of $40$ time instants $t_k$, $k=1,\ldots,40$.  Specifically, the graph topology is a random graph  at time $t_1=1$ and the graph evolves until it becomes a graph with $3$ clusters at time $t_{20}$ and  a bipartite graph at time $t_{40}$. 
 We then generated $M$ random graph signals
$\bx(i)$, $i=1,\ldots,M$, drawn from a Gaussian distribution, in order to avoid favoring any specific graph spectral representation.\\
Hence, we solved problem $\mathcal{P}$ to explore the sparsity/data-fitting trade-off behavior,  by setting $K=3$ and $\gamma=1$. Specifically,   for each observed signal the following optimization problem is considered
\beq \nonumber
\begin{array}{lll}
 \underset{\bs(i) \in \mathbb{R}^{N \times M}, r \in \mathbb{R} }{\min}  &  \parallel \bx(i) -\mU(r) \bs(i) \parallel_{F}^{2} \medskip\\ 
 \qquad\text{s.t.} 
     &  \parallel \bs(i) \parallel_{0}= K.
\end{array}
\eeq
To solve this problem, Algorithm $1$ is employed   to obtain, via a line search, the optimal parameter $r^{\star}$ and the associated sparse graph signals.

\begin{figure}[t] 
\centering
	\includegraphics[width=8.5cm,height=5.0cm]{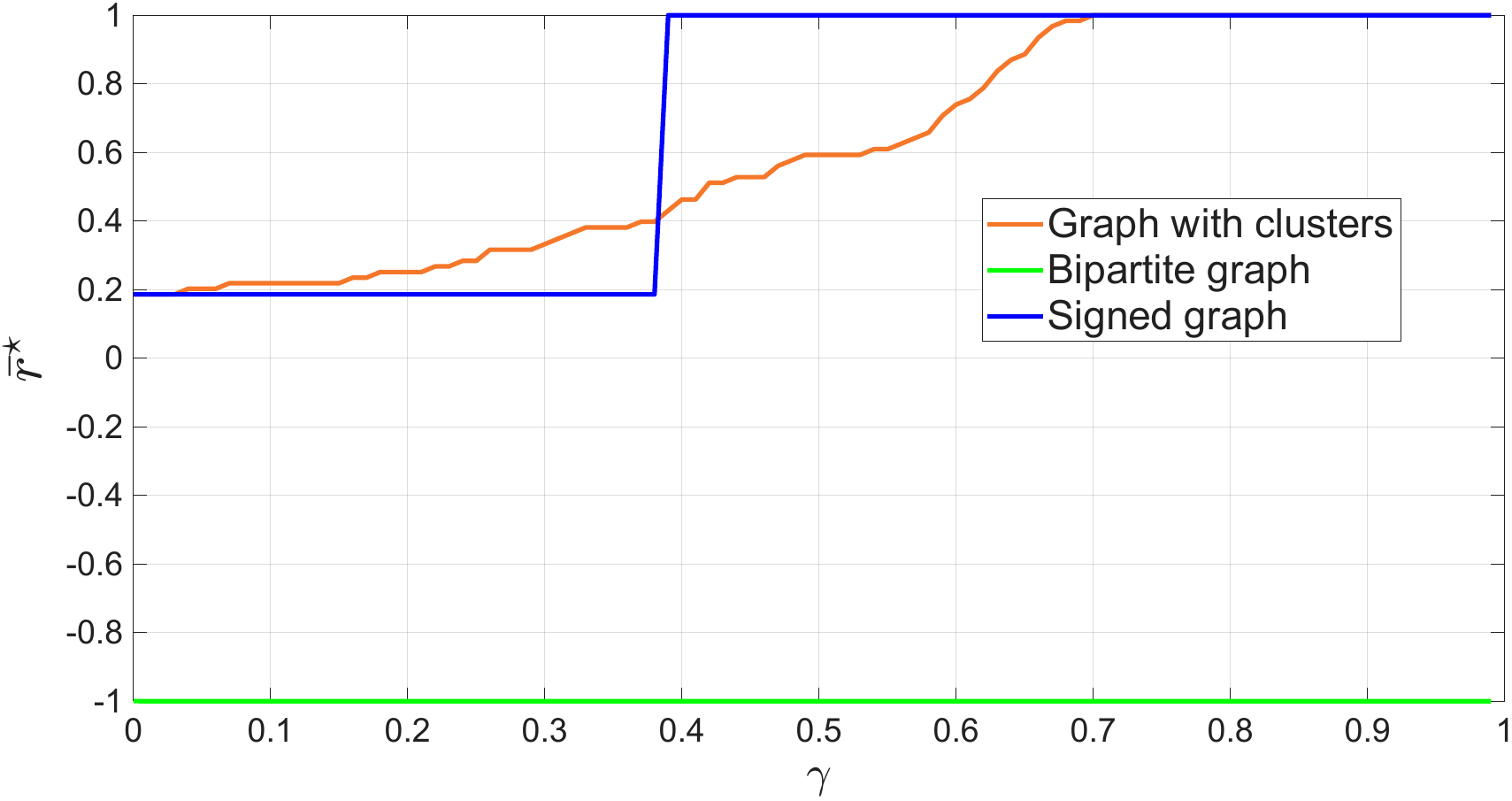}
	  
	\caption{Optimal parameter $r$ versus $\gamma.$}\label{fig:fig_2}
\end{figure}
\begin{figure}[t] 
\centering
	\includegraphics[width=8.5cm,height=5.0cm]{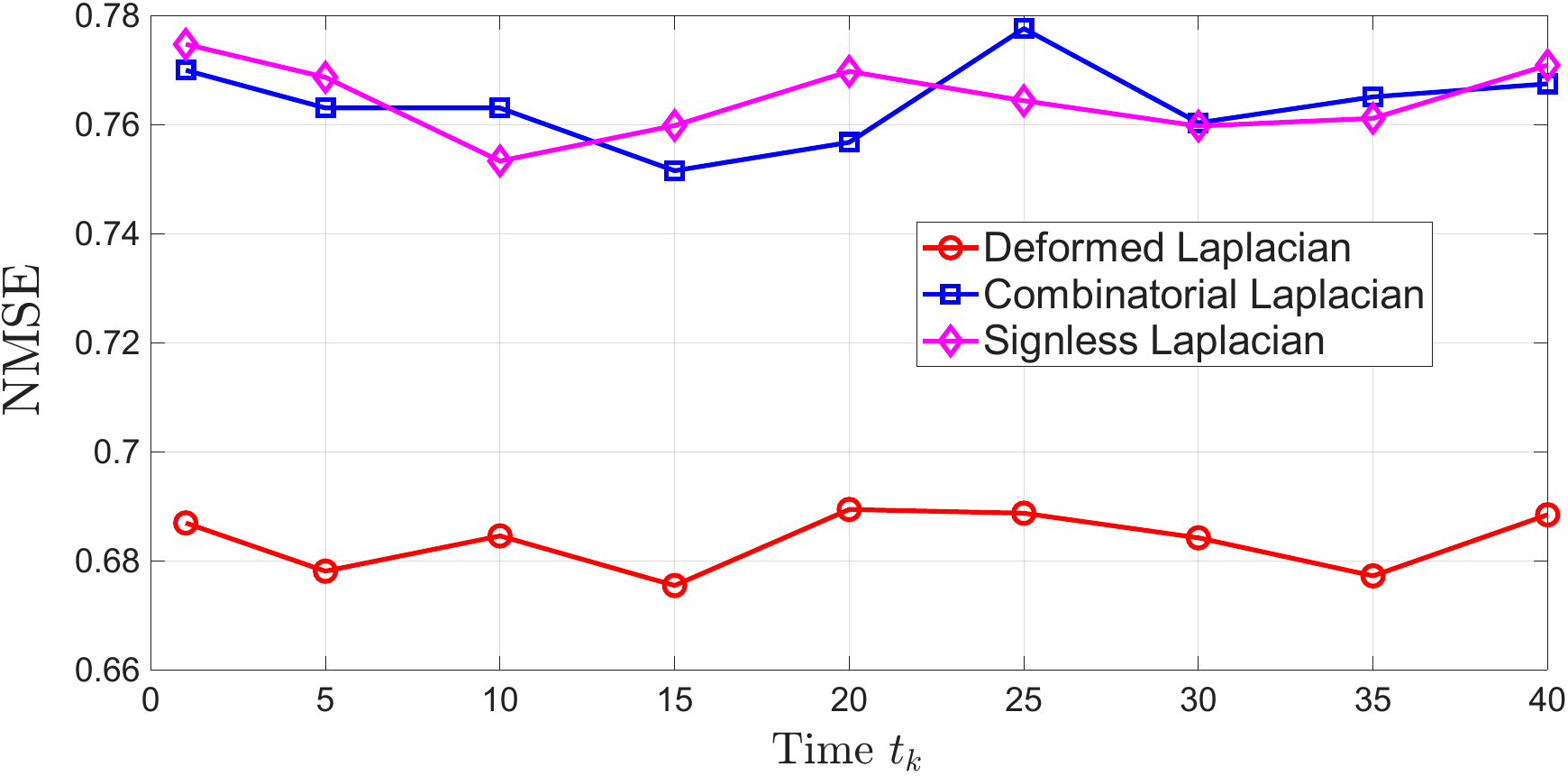}
	    \caption{NMSE versus time for a dynamic graph topology.}\label{fig:fig_3}
\end{figure}

In Fig. \ref{fig:fig_3}  we report the averaged $\text{NMSE}=\sum_{i=1}^M\frac{\parallel \bx(i)-{\hat{\bx}}(i)\parallel_2}{M \parallel \bx(i)\parallel_2 }$, where ${\hat{\bx}}(i)=\mU_{\mathcal{K}}(r^{\star})\bs^{\star}_{\mathcal{K}}$ is the estimated sparse signal. We compare the results obtained by solving $\mathcal{P}$ using as Laplacian forms the combinatorial one and the signless Laplacian. The results highlight that the deformed Laplacian provides a more effective spectral representation, achieving lower reconstruction errors under the same sparsity constraints.

\subsection{Real Data Experiments}
In this section, we apply the developed framework to real-world datasets, showing how the proposed framework yields significant performance gains in both signal representation and recovery. Specifically, we investigate its application to financial data from U.S. stock markets and to the Copenhagen mobile phone communication network.

\textbf{Returns of S$\boldsymbol{\&}$P 500 Stocks.}
As a first real-data application, we considered the  publicly available dataset of  historical daily prices from Yahoo Finance™ for 333 stocks listed in the S$\&$P 500 Index, along with 8 sector indices \cite{de2022learning}. These indexes are  Consumer Staples, Energy, Financials, Health Care, Industrials, Materials, Real Estate, and Utilities. 
Sector labels for each stock are obtained from the Global Industry Classification Standard (GICS) \cite{GICS}. However, predefined sector classifications may be overly restrictive, as stocks can exhibit interactions across multiple sectors. To address this, in \cite{de2022learning} the authors proposed a data-driven bipartite graph learning approach that allows for flexible associations between stocks and sectors.\\
The node set is  composed of  $8$ nodes 
 corresponding to the sector indices,  and $333$ nodes associated with the stocks, for a total number of $N=341$ nodes. We considered the data set and the code provided in \cite{de2022learning} to derive the adjacency matrix $\mA$ using the kSBG method introduced in \cite{de2022learning}. This method learns links between stocks and sectors using  $k=8$ component bipartite estimators. The observed graph signals are 
 the prices $P_{i,j}$ of each stock (node)  
$i$ at day $j$, for $j=1,\ldots,Q$,
 where $Q=1292$ denotes the number of observed daily.
 The observation window is restricted to the period from Jan 1st, 2018 to Jan 1st, 2021, observing $M=84$ days (about $4$ months in terms of stock market days).
  Hence, we derive the 
 log-return  signal matrix $\mX \in \mathbb{R}^{N \times M}$, with entries   computed as $X_{i,j}=\log P_{i,j}-\log P_{i,j-1}$.
\begin{figure}[t] 
\centering
	\includegraphics[width=8.5cm,height=5.0cm]{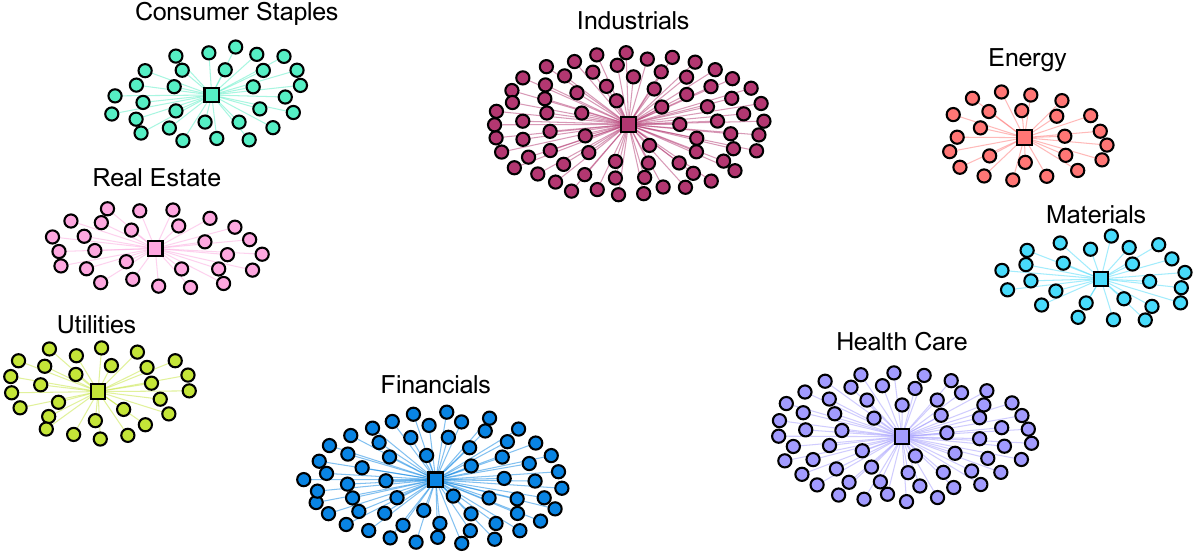}
	    \caption{Financial graph of the U.S. stock markets.}\label{fig:fig_finan_graph}
\end{figure}

 In Fig. \ref{fig:fig_finan_graph} we illustrate the $k=8$ clustered bipartite graph recovered applying the kSBG method proposed in \cite{de2022learning}. Then, we run Algorithm $1$ using the matrices $\mX$ and $\mA$ as inputs.  In Fig. \ref{fig:fig_mse_fin} we illustrate the normalized mean squared error in the recovering of the signals, defined as $\text{NMSE}=\frac{\parallel \mX-\mU_{\mathcal{K}} \mS_{\mathcal{K}}\parallel_F}{\parallel \mX \parallel_F}$ for  $\gamma=0.4$.  It can be  noted a decline in the estimation error starting from Jan 2020 which can be explained considering the impact of the COVID-19 pandemic on the U.S. stock market. Furthermore, the results indicate that the learned Laplacian forms across all  observed temporal windows   do not correspond to any of the standard Laplacian forms, i.e., combinatorial, signless, or signed. 
 \begin{figure}[t] 
\centering
	\includegraphics[width=8.5cm,height=5.2cm]{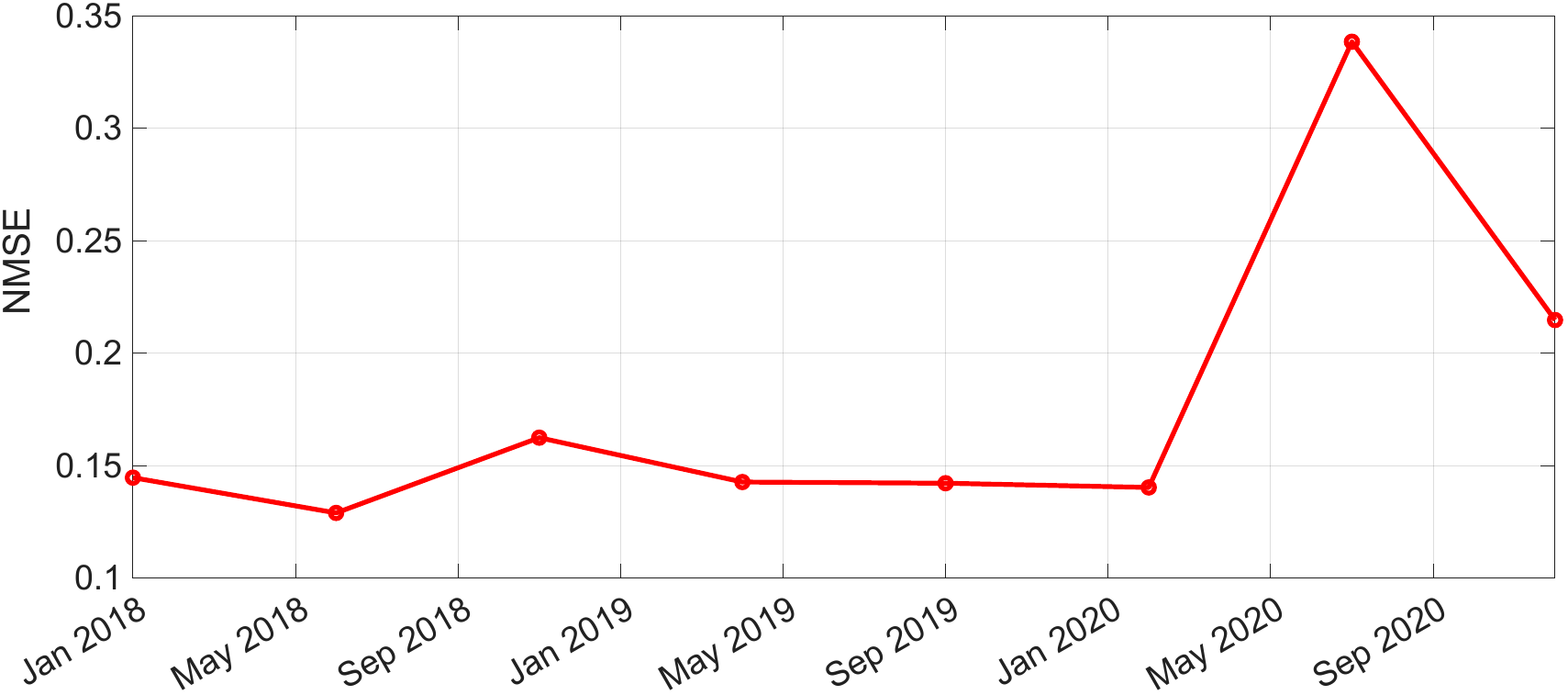}
	    \caption{NMSE versus time for the U.S. stock markets.}\label{fig:fig_mse_fin}
\end{figure}
 
 \textbf{The Copenhagen Communication Networks.}
 As further application to  real-world networks, we consider  the  time-varying network which connects a population of more than 700 university students over a period of four weeks. The dataset was collected via smartphones as part of the Copenhagen Networks Study \cite{copenhagen}.
We focus on the phone call network, where nodes represent users and an edge is established between two users if at least one call occurred between them within a given week. As node signals, we use the total number of outgoing calls from each node to all other nodes, aggregated over a one-week observation window.

Given this graph signal,  we apply our method to recover both the deformed graph Laplacian form and the sparse signal representations.  Fig. \ref{fig:fig_cop2} reports  the NMSE of the signal reconstruction versus the signal sparsity, i.e., for different values of the signal bandwidth $K$ and  varying the coefficient $\gamma$.
The results highlights a clear trade-off between  sparsity and accuracy, since as the number of  spectral signal coefficients increases, the reconstruction error decreases accordingly. Furthermore, as $\gamma$ approaches $1$, the NMSE decreases, as more weight is assigned to the data-fitting error  than to  signal smoothness.
Finally, we observed that for $0 \geq \gamma \leq 1$ the optimal parameters $r$ of the learned deformed Laplacian  does not recover the combinatorial or the signless Laplacian.
\begin{figure}[t] 
\centering
	\includegraphics[width=8.5cm,height=5.1cm]{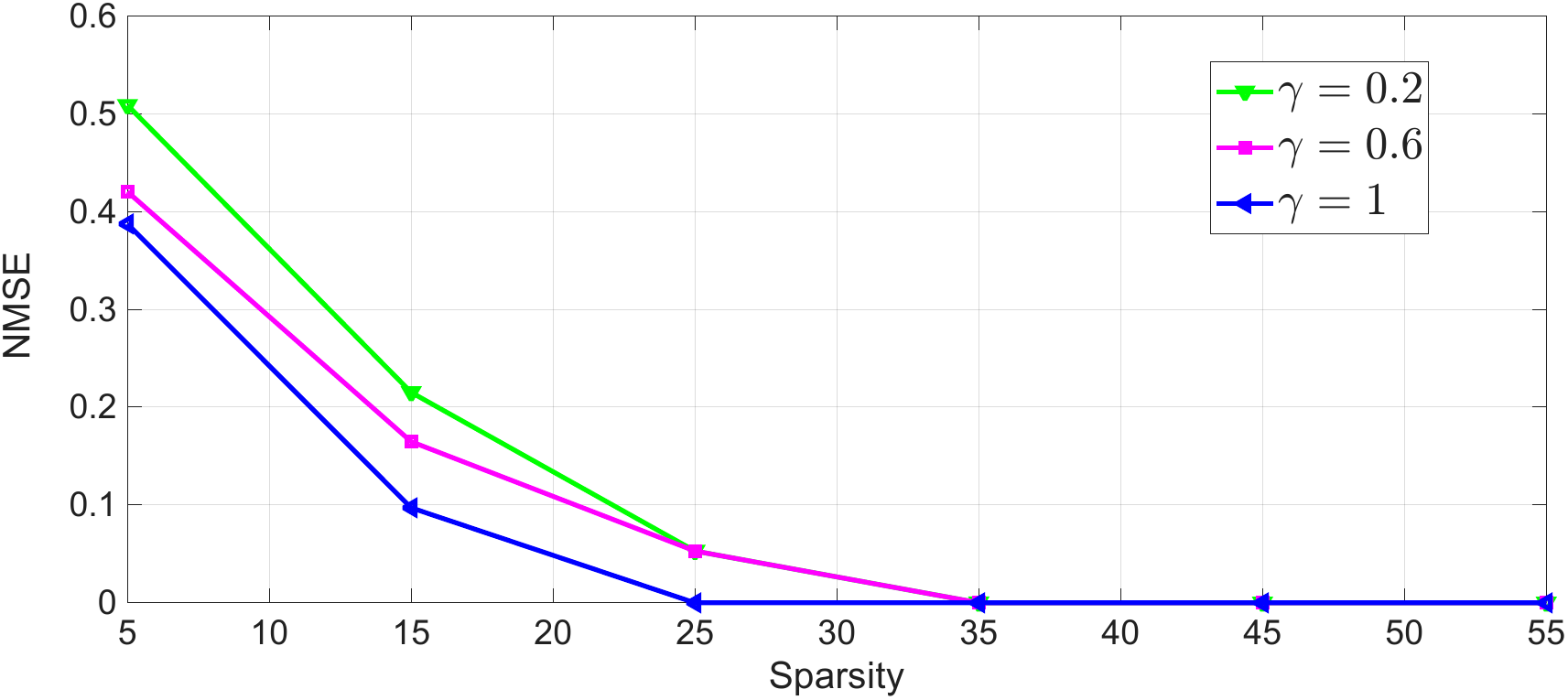}
	    \caption{NMSE versus sparsity for the Copenhagen communication phone calls network.}\label{fig:fig_cop2}
\end{figure}

\section{Conclusions}
In this paper, we extend graph signal processing to graphs represented via the deformed Laplacian, a parametric algebraic operator defined as a second-order matrix polynomial. This formulation provides a unified framework that encompasses different Laplacian forms and enables adaptive modeling of graph structures through a single parameter. 
We have shown that this parametrization allows learning Laplacian operators tailored to both the observed data and the underlying graph topology. In particular, we developed an optimization framework that jointly estimates the graph signal representation and the optimal deformed Laplacian form, thus removing the need for prior knowledge of its form. 
From a dynamical perspective, the deformed Laplacian can be interpreted as the generator of a parametric reaction–diffusion process on graphs, highlighting the role of the deformation parameter in controlling diffusion and local node dynamics.

Extensive numerical experiments on both synthetic and real-world datasets demonstrate improved reconstruction accuracy and enhanced sparsity, particularly in graphs with heterogeneous structures, such as those combining clustered and bipartite patterns. Overall, the proposed framework provides a principled and adaptive approach for learning graph representations, paving the way for data-driven modeling of complex graph-structured data.





\bibliographystyle{IEEEbib}
\bibliography{refs.bib}

\end{document}